\documentclass[11pt]{article}

\usepackage{multirow}

\usepackage[utf8]{inputenc} 
\usepackage[T1]{fontenc}    
\usepackage{hyperref}       
\usepackage{url}            
\usepackage{booktabs}       
\usepackage{amsfonts}       
\usepackage{nicefrac}       
\usepackage{microtype}      
\usepackage{graphicx}
\usepackage[font=bf]{caption}
\usepackage{color, url}
\usepackage{xspace}
\usepackage{subfloat}
\usepackage{subfigure}
\usepackage{amsmath}
\usepackage{mathrsfs}
\usepackage{amssymb}
\usepackage{amsmath}
\usepackage{amsthm}
\usepackage{epstopdf}
\usepackage{balance}
\newtheorem{theorem}{Theorem}
\newtheorem{lemma}{Lemma}
\newtheorem{definition}{Definition}
\newtheorem{corollary}{Corollary}
\usepackage{algorithm}
\usepackage{algorithmic}

\usepackage{bbm}
\usepackage{epsfig,endnotes}
\usepackage{grffile}

\newcounter{subcopyrightbox@save}
\usepackage{bm}
\usepackage{mathtools}

\usepackage{cleveref}
\usepackage{geometry}
\geometry{a4paper,scale=0.75}
\usepackage{changepage}

\allowdisplaybreaks
\usepackage[sort&compress,round,comma,authoryear]{natbib}

\usepackage{algorithmic}

\newcommand{\myparatight}[1]{\smallskip\noindent{\bf {#1}:}~}

\newcommand{\RN}[1]{%
  \textup{\uppercase\expandafter{\romannumeral#1}}%
}

\DeclareMathOperator*{\argmax}{arg\,max}

\AtBeginDocument{%
  \providecommand\BibTeX{{%
    \normalfont B\kern-0.5em{\scshape i\kern-0.25em b}\kern-0.8em\TeX}}}

\begin{document}

\begin{center}
{\LARGE{\bf{Intrinsic Certified Robustness of Bagging against Data Poisoning Attacks}}}

  \vspace{1cm}
\begin{tabular}{ccccc}
Jinyuan Jia & & Xiaoyu Cao && Neil Zhenqiang Gong \\
Duke University && Duke University && Duke University \\
jinyuan.jia@duke.edu && xiaoyu.cao@duke.edu &&neil.gong@duke.edu \\ 
\end{tabular}
\end{center}
  \vspace{.5cm}

\begin{abstract}

In a \emph{data poisoning attack}, an attacker modifies, deletes, and/or inserts some training examples to corrupt the learnt machine learning model. \emph{Bootstrap Aggregating (bagging)} is a well-known ensemble learning method,  which trains multiple base models on random subsamples of a training dataset using a base learning algorithm and uses majority vote to predict labels of testing examples. We prove the intrinsic certified robustness of bagging against data poisoning attacks. Specifically, we show that bagging with an arbitrary base learning algorithm provably predicts the same label for a testing example when the number of modified, deleted, and/or inserted training examples is bounded by a threshold. Moreover, we show that our derived threshold is tight if no assumptions on the base learning algorithm are made.  We evaluate our method on  MNIST and CIFAR10. For instance, our method achieves a certified accuracy of $91.1\%$ on MNIST when arbitrarily modifying, deleting, and/or inserting 100 training examples. Code is available at: \url{https://github.com/jjy1994/BaggingCertifyDataPoisoning}.
\end{abstract}

\section{Introduction}
Machine learning models trained on user-provided data are vulnerable to \emph{data poisoning attacks}~\citep{nelson2008exploiting,biggio2012poisoning,xiao2015feature,li2016data,steinhardt2017certified,shafahi2018poison}, in which malicious users carefully poison (i.e., modify, delete, and/or insert) some training examples such that  the learnt model is corrupted and makes predictions for testing examples as an attacker desires. In particular, the corrupted model predicts incorrect labels for a large fraction of testing examples indiscriminately (i.e., a large testing error rate) or for some attacker-chosen testing examples. Unlike adversarial examples~\citep{szegedy2013intriguing,carlini2017towards}, which carefully perturb each testing example such that a model predicts an incorrect label for the perturbed testing example, data poisoning attacks corrupt the model such that it predicts incorrect labels for many clean testing examples. Like adversarial examples, data poisoning attacks pose severe security threats to machine learning systems.

To mitigate data poisoning attacks, various defenses~\citep{Cretu08,barreno2010security,Suciu18,Tran18,Feng14,Jagielski18,ma2019data,wang2020certifying,rosenfeld2020certified} have been proposed in the literature. Most of these defenses~\citep{Cretu08,barreno2010security,Suciu18,Tran18,Feng14,Jagielski18} achieve \emph{empirical} robustness against certain data poisoning attacks and are often broken by strong adaptive attacks. To end the cat-and-mouse game between attackers and defenders, \emph{certified defenses}~\citep{ma2019data,wang2020certifying,rosenfeld2020certified} were proposed. We say a learning algorithm is certifiably robust against data poisoning attacks if it can learn a classifier that provably predicts the same label for a testing example when the number of poisoned training examples is bounded.  
For instance, \cite{ma2019data} showed that a classifier trained with differential privacy certifies robustness against data poisoning attacks. \cite{wang2020certifying} and \cite{rosenfeld2020certified} leveraged \emph{randomized smoothing}~\citep{cao2017mitigating, cohen2019certified}, which was originally designed to certify robustness against adversarial examples, to certify robustness against data poisoning attacks that modify  labels and/or features  of existing training examples. 


However, these certified defenses suffer from two major limitations. First, they are only applicable to limited scenarios, i.e., \cite{ma2019data} is limited to learning algorithms that can be differentially private, while \cite{wang2020certifying} and \cite{rosenfeld2020certified} are limited to data poisoning attacks that only modify existing training examples. Second, their certified robustness guarantees are loose, meaning that a learning algorithm is certifiably more robust than their guarantees indicate. We note that \cite{steinhardt2017certified} derives an approximate upper bound of the loss function for data poisoning attacks. However, their method cannot certify that the learnt model predicts the same label for a testing example.

We aim to address these limitations in this work. Our approach is based on a well-known ensemble learning method called \emph{Bootstrap Aggregating (bagging)}~\citep{breiman1996bagging}. Given a training dataset, we create a random \emph{subsample} with $k$ training examples sampled from the training dataset uniformly at random with replacement. Moreover, we use a deterministic or randomized base learning algorithm to learn a base classifier on the subsample. Due to the randomness in sampling the subsample and the (randomized) base learning algorithm, the label predicted for a testing example $\mathbf{x}$ by the learnt base classifier is random. Therefore, we define $p_j$ as the probability that the learnt base classifier predicts label $j$ for $\mathbf{x}$, where $j=1,2,\cdots,c$. We call $p_j$ \emph{label probability}. In bagging, the \emph{ensemble classifier} essentially predicts the label with the largest label probability for $\mathbf{x}$.


Our first major theoretical result is that we prove the ensemble classifier in bagging predicts the same label for a testing example when the number of poisoned training examples is no larger than a threshold. We call the threshold \emph{certified poisoning size}. Our second major theoretical result is that we prove our derived certified poisoning size is tight (i.e., it is impossible to derive a certified poisoning size larger than ours) if no assumptions on the base learning algorithm are made. Note that the certified poisoning sizes may be different for different testing examples. 

Our certified poisoning size for a testing example  is the optimal solution to an optimization problem, which involves the testing example's largest and second largest label probabilities predicted by the bagging's ensemble classifier. 
However, it is computationally challenging to compute the exact largest and second largest label probabilities, as there are an exponential number of subsamples with $k$ training examples. To address the challenge, we propose a Monto Carlo algorithm to simultaneously estimate a lower bound of the largest label probability and an upper bound of the second largest label probability for multiple testing examples via training $N$ base classifiers on $N$ random subsamples. Moreover, we design an efficient algorithm to solve the optimization problem with the estimated largest and second largest label probabilities to compute certified poisoning size. 

We empirically evaluate our method on MNIST and CIFAR10. For instance, our method can achieve a certified accuracy of $91.1\%$ on MNIST when 100 training examples are arbitrarily poisoned, where $k=100$ and $N=1,000$. Under the same attack setting, \cite{ma2019data}, \cite{wang2020certifying}, and \cite{rosenfeld2020certified} achieve $0$ certified accuracy on a simpler MNIST 1/7 dataset. Moreover, we show that training the base classifiers using transfer learning can significantly improve the certified accuracy.

Our contributions are summarized as follows: 
\begin{itemize}
    \item We derive the first intrinsic certified robustness of bagging against data poisoning attacks and prove the tightness of our robustness guarantee.  
    
    \item We develop algorithms to compute the certified poisoning size in practice. 
    
    \item We evaluate our method on MNIST and CIFAR10. 
\end{itemize}

All our proofs are shown in the Supplemental Material.

\section{Certified Robustness of Bagging}

 Assuming we have a training dataset $\mathcal{D}=\{(\mathbf{x}_1,y_1), (\mathbf{x}_2,y_2), \cdots, (\mathbf{x}_n,y_n)\}$ with $n$ examples, where $\mathbf{x}_i$ and $y_i$ are  the feature vector and  label of the $i$th training example, respectively. Moreover, we are given an arbitrary deterministic or randomized base learning algorithm $\mathcal{A}$, which takes a training dataset $\mathcal{D}$ as input and outputs a classifier $f$, i.e., $f=\mathcal{A}(\mathcal{D})$. $f(x)$ is the predicted label for a testing example $\mathbf{x}$. For convenience, we  jointly represent the training and testing processes as $\mathcal{A}(\mathcal{D}, \mathbf{x})$, which is $\mathbf{x}$'s label predicted by a classifier that is trained using algorithm  $\mathcal{A}$ and  training dataset $\mathcal{D}$.

\myparatight{Data poisoning attacks} In a data poisoning attack, an attacker poisons the training dataset $\mathcal{D}$ such that the learnt classifier makes predictions for testing examples as the attacker desires. In particular, the attacker can carefully \emph{modify}, \emph{delete}, and/or \emph{insert} some training examples in $\mathcal{D}$ such that $\mathcal{A}(\mathcal{D}, \mathbf{x})\neq \mathcal{A}(\mathcal{D}', \mathbf{x})$ for many testing examples $\mathbf{x}$ or some attacker-chosen $\mathbf{x}$, where  $\mathcal{D}'$ is the poisoned training dataset. We note that modifying a training example means modifying its feature vector and/or label.  
 We denote the set of poisoned training datasets with at most $r$ poisoned training examples as follows:
\begin{align}
{B}(\mathcal{D},r)=\{\mathcal{D}'|\max\{|\mathcal{D}|,|\mathcal{D}'|\}-|\mathcal{D}\cap\mathcal{D}^{\prime}|\leq r\}.
\end{align}
 Intuitively, $\max\{|\mathcal{D}|,|\mathcal{D}'|\}-|\mathcal{D}\cap\mathcal{D}^{\prime}|$ is the minimum number of modified/deleted/inserted training examples that can change $\mathcal{D}$ to $\mathcal{D}'$.  

\myparatight{Bootstrap aggregating (Bagging)~\citep{breiman1996bagging}} Bagging is a well-known ensemble learning method. Roughly speaking, bagging  creates many subsamples of a training dataset with replacement and trains a base classifier on each subsample. For a testing example, bagging uses each base classifier to predict its label and takes majority vote among the predicted labels as the  label of the testing example. Figure~\ref{fig:illustration} shows a toy example to illustrate why bagging certifies robustness against data poisoning attacks. When the poisoned training examples are minority in the training dataset, a majority of the subsamples do not include any poisoned training examples. Therefore, a majority of the base classifiers and the bagging's predicted labels for testing examples are not influenced by the poisoned training examples. 

\begin{figure}[!t]
    \center
   {\includegraphics[width=0.7\textwidth]{./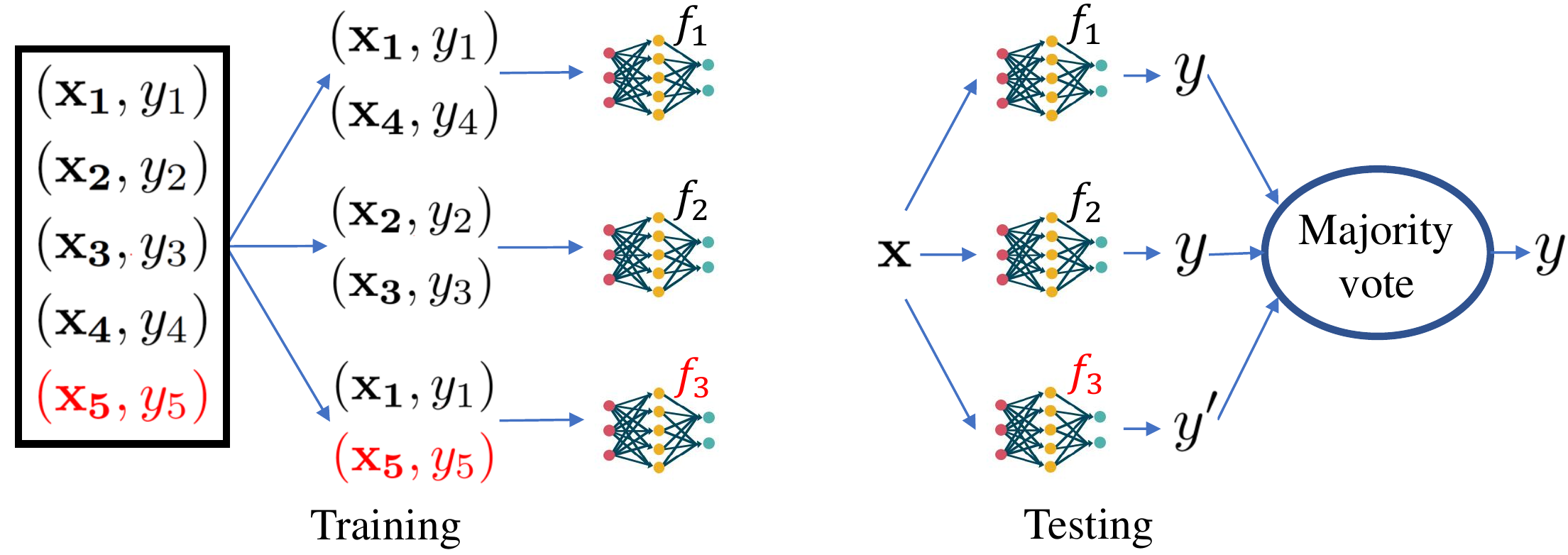}}  
    \caption{An example to illustrate why bagging is robust against data poisoning attacks, where ($\mathbf{x}_5$,$y_5$) is the poisoned training example. Base classifiers $f_1$ and $f_2$ are trained using clean training examples and bagging predicts the correct label for a testing example after majority vote among the three base classifiers. }
    \label{fig:illustration}
\end{figure}

Next, we describe a probabilistic view of bagging, which makes it possible to theoretically analyze its certified robustness against data poisoning attacks. Specifically, we denote by $g(\mathcal{D})$ a random subsample, which is a list of $k$ examples that are sampled from $\mathcal{D}$ with replacement uniformly at random. We use the base learning algorithm $\mathcal{A}$ to learn a base classifier on $g(\mathcal{D})$. 
Due to the randomness in sampling the subsample $g(\mathcal{D})$ and  the (randomized) base learning algorithm $\mathcal{A}$, the label $\mathcal{A}(g(\mathcal{D}), \mathbf{x})$ predicted by the base classifier learnt on $g(\mathcal{D})$ for $\mathbf{x}$ is random. We denote by $p_j=\text{Pr}(\mathcal{A}(g(\mathcal{D}), \mathbf{x})=j)$ the probability that the learnt base classifier predicts label $j$ for $\mathbf{x}$, where $j=1,2,\cdots,c$. We call $p_j$ \emph{label probability}. The \emph{ensemble classifier} $h$ in bagging essentially predicts the label with the largest label probability for $\mathbf{x}$, i.e., we have: 
\begin{align}
\label{baggingdefinition}
h(\mathcal{D}, \mathbf{x})=\argmax_{j\in\{1,2,\cdots,c\}}p_j,
\end{align}
where  $h(\mathcal{D}, \mathbf{x})$ is the predicted label for $\mathbf{x}$ when the ensemble classifier $h$ is trained on $\mathcal{D}$.


\myparatight{Certified robustness of bagging}  We prove the certified robustness of bagging against data poisoning attacks. In particular, we show that the ensemble classifier in bagging predicts the same label for a testing example when the number of poisoned training examples is no larger than some threshold (called \emph{certified poisoning size}). Formally, we aim to show $h(\mathcal{D}^{\prime},\mathbf{x})=h(\mathcal{D},\mathbf{x})$ for $\forall \mathcal{D}^{\prime} \in {B}(\mathcal{D},r^{*})$, where $r^{*}$ is the certified poisoning size. For convenience, we define the following two random variables: 
\begin{align}
 X = g(\mathcal{D}),  Y = g(\mathcal{D}^{\prime}), 
\end{align}
where $X$ and $Y$  are two random subsamples with $k$ examples sampled from $\mathcal{D}$ and $\mathcal{D}^{\prime}$ with replacement uniformly at random, respectively. $p_j=\text{Pr}(\mathcal{A}(X, \mathbf{x})=j)$ and $p_j'=\text{Pr}(\mathcal{A}(Y, \mathbf{x})=j)$ are the label probabilities of label $j$ for testing example $\mathbf{x}$ when the training dataset is $\mathcal{D}$ and its poisoned version $\mathcal{D}^{\prime}$, respectively. For simplicity, we use $\Omega$ to denote the joint space of $X$ and $Y$, i.e., each element in $\Omega$ is a subsample of $k$ examples sampled from $\mathcal{D}$ or $\mathcal{D}^{\prime}$ uniformly at random with replacement.

Suppose the ensemble classifier predicts label $l$ for $\mathbf{x}$ when trained on the clean training dataset, i.e., $h(\mathcal{D},\mathbf{x})=l$. Our goal is to  find the maximal poisoning size $r$ such that the ensemble classifier still predicts label $l$ for $\mathbf{x}$ when trained on the poisoned training dataset with at most $r$ poisoned training examples. Formally, our goal is to find the maximal poisoning size $r$ such that the following inequality is satisfied for $\forall \mathcal{D}^{\prime} \in {B}(\mathcal{D},r)$:  
\begin{align}
\label{equation_to_derive_for_y}
h(\mathcal{D}',\mathbf{x})=l \Longleftrightarrow p_l'  > \max_{j\neq l}p_j'.
\end{align}
However, it is challenging to compute $p_l'$ and $\max_{j\neq l}p_j'$ due to the complicated base learning algorithm $\mathcal{A}$. To address the challenge,  we aim to derive a lower bound of $p_l'$ and an upper bound of $\max_{j\neq l}p_j'$, where the lower bound and upper bound are independent from the base learning algorithm $\mathcal{A}$ and  can be easily computed for a given $r$.  
In particular,  we  derive the lower bound and upper bound as the probabilities that the random variable $Y$ is in certain regions of the space $\Omega$ via the Neyman-Pearson Lemma~\citep{neyman1933ix}. Then, we can find the maximal $r$ such that the lower bound is larger than the upper bound for any $\mathcal{D}^{\prime} \in B(\mathcal{D},r)$, and such maximal $r$ is our certified poisoning size $r^{*}$. 


Next, we show the high-level idea of our approach to derive the lower and upper bounds (details are in Supplemental Material). 
Our key idea is to construct regions in the space $\Omega$ such that the random variables $X$ and $Y$ satisfy the conditions of the Neyman-Pearson Lemma~\citep{neyman1933ix}, which enables us to derive the lower and upper bounds using the probabilities that $Y$ is in these regions. Next, we discuss how to construct the regions. Suppose we have a lower bound $\underline{p_l}$ of the largest label probability ${p_l}$ and an upper bound $\overline{p}_s$ of the second largest label probability $p_s$ when the ensemble classifier is trained on the clean training dataset. Formally, 
$\underline{p_l}$ and $\overline{p}_s$ satisfy:
\begin{align}
{\small
\label{equation_of_condition_probability_bagging}
p_l \geq \underline{p_l} \geq \overline{p}_s  \geq p_s= \max_{j\neq l} p_j. 
}
\end{align}
We use the probability bounds instead of the exact label probabilities ${p_l}$ and ${p}_s$, because it is challenging to  compute them exactly.
We first divide the space $\Omega$ into three regions $\mathcal{B}$, $\mathcal{C}$, and $\mathcal{E}$, which include subsamples with $k$ examples sampled from $\mathcal{D}$, $\mathcal{D}^{\prime}$, and $\mathcal{D}\cap \mathcal{D}^{\prime}$, respectively. Then, we can find a region $\mathcal{B}^{\prime} \subseteq \mathcal{E}$ such that we have $\text{Pr}(X \in \mathcal{B}\cup \mathcal{B}^{\prime})=\underline{p_l} -\delta_{l}$, where $\delta_l = \underline{p_l} - (\lfloor \underline{p_l}\cdot n^{k} \rfloor)/n^{k}$ is a small residual. We have the residual $\delta_l$ because $\text{Pr}(X \in \mathcal{B}\cup \mathcal{B}^{\prime})$ is an integer multiple of $\frac{1}{n^k}$.  The reason we assume we can find such region $\mathcal{B}^{\prime}$ is that we aim to derive a sufficient condition. Similarly, we can find $\mathcal{C}_s \subseteq \mathcal{E}$ such that we have $\text{Pr}(X \in  \mathcal{C}_{s}) = \overline{p}_s +\delta_{s}$, where $\delta_s = (\lceil \overline{p}_{s} \cdot n^{k} \rceil)/n^{k} -\overline{p}_{s}$ is a small residual. Given these regions, we leverage the Neyman-Pearson Lemma~\citep{neyman1933ix} to derive a lower bound of $p_l'$ and an upper bound of $\max_{j\neq l}p_j'$ as follows: 
\begin{align}
    p_l' &\geq \text{Pr}(Y \in  \mathcal{B}\cup \mathcal{B}^{\prime}), \\
   \max_{j\neq l}p_j' &\leq \text{Pr}(Y \in  \mathcal{C}\cup \mathcal{C}_{s}), 
\end{align}
where the lower bound $\text{Pr}(Y \in  \mathcal{B}\cup \mathcal{B}^{\prime})$ and upper bound $\text{Pr}(Y \in  \mathcal{C}\cup \mathcal{C}_{s})$ can be easily computed for a given $r$. 
Finally, we find the maximal $r$ such that the lower bound is still larger than the upper bound, which is our certified poisoning size $r^{*}$. 
The following Theorem~\ref{certified_radius_bagging} formally summarizes our certified robustness guarantee of bagging.

\begin{theorem}[Certified Poisoning Size of Bagging]
\label{certified_radius_bagging}
Given a training dataset $\mathcal{D}$, a deterministic or randomized base learning algorithm $\mathcal{A}$, and a testing example $\mathbf{x}$. The ensemble classifier $h$ in bagging is defined in Equation (\ref{baggingdefinition}). Suppose $l$ and $s$ respectively are the labels with the largest and second largest label probabilities predicted by $h$ for $\mathbf{x}$. Moreover, the probability bounds $\underline{p_l}$ and $\overline{p}_s$ satisfy (\ref{equation_of_condition_probability_bagging}).  
Then, $h$ still predicts label $l$ for $\mathbf{x}$ when the number of poisoned training examples is bounded by $r^*$, i.e., we have: 
\begin{align}
 h(\mathcal{D}', \mathbf{x})=l, \forall \mathcal{D}^{\prime}\in {B}(\mathcal{D},r^*), 
\end{align}
where $r^{*}$ is the solution to the following optimization problem: 
\begin{align}
r^{*} & =  \argmax_{r} r \nonumber \\
& \text{s.t.} \max_{n-r \leq n' \leq n+r}(\frac{n^{\prime}}{n})^{k} - 2 \cdot (\frac{\max(n,n^{\prime})-r}{n})^{k} \nonumber\\
\label{certified_condition_bagging}
& + 1 - (\underline{p_l}-\overline{p}_s -\delta_l - \delta_s) < 0, 
\end{align}
where $n=|\mathcal{D}|$, $n^{\prime}=|\mathcal{D}^{\prime}|$, $\delta_l = \underline{p_l} - (\lfloor \underline{p_l}\cdot n^{k} \rfloor)/n^{k}$, and $\delta_s = (\lceil \overline{p}_{s} \cdot n^{k} \rceil)/n^{k} -\overline{p}_{s}$. 
\end{theorem}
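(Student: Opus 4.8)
Fix an arbitrary poisoned dataset $\mathcal{D}'\in B(\mathcal{D},r)$ and write $n'=|\mathcal{D}'|$ and $n^*=|\mathcal{D}\cap\mathcal{D}'|$, so that $\max(n,n')-n^*\le r$, i.e.\ $n^*\ge\max(n,n')-r$ and $n-r\le n'\le n+r$. By the characterization in (\ref{equation_to_derive_for_y}) it suffices to prove $p_l'>\max_{j\ne l}p_j'$. The plan is to produce a lower bound on $p_l'$ and an upper bound on $\max_{j\ne l}p_j'$ that are independent of the base learner $\mathcal{A}$ via the Neyman--Pearson Lemma, using only the hypotheses $p_l\ge\underline{p_l}$ and $\max_{j\ne l}p_j=p_s\le\overline p_s$ together with the structure of the sampling distributions of $X=g(\mathcal{D})$ and $Y=g(\mathcal{D}')$, and then to show that $\underline{p_l}-\delta_l$ exceeds $\overline p_s+\delta_s$ ``enough'' is exactly the constraint defining $r^*$.

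First I would set up the space. Regarding the (possibly randomized) base learner through the map $q_j(z):=\Pr(\mathcal{A}(z,\mathbf{x})=j)\in[0,1]$ on ordered $k$-lists $z\in\Omega$, we have $p_j=\mathbb{E}_{z\sim X}[q_j(z)]$ and $p_j'=\mathbb{E}_{z\sim Y}[q_j(z)]$, so each $q_j$ is a randomized test between $X$ and $Y$. Each $z$ carries mass $n^{-k}$ under $X$ if it is a list over $\mathcal{D}$ (else $0$) and $n'^{-k}$ under $Y$ if it is a list over $\mathcal{D}'$ (else $0$). I would partition the common support $\Omega$ into $\mathcal{B}$ (lists over $\mathcal{D}$ with at least one entry in $\mathcal{D}\setminus\mathcal{D}'$; positive $X$-mass, zero $Y$-mass), $\mathcal{C}$ (the symmetric set; zero $X$-mass), and $\mathcal{E}$ (lists over $\mathcal{D}\cap\mathcal{D}'$, where the likelihood ratio $\Pr(X=z)/\Pr(Y=z)$ equals the constant $(n'/n)^k$). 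A direct count gives $\Pr(X\in\mathcal{E})=(n^*/n)^k$, $\Pr(Y\in\mathcal{E})=(n^*/n')^k$, $\Pr(X\in\mathcal{B})=1-(n^*/n)^k$, $\Pr(Y\in\mathcal{C})=1-(n^*/n')^k$.

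Next I would build the two Neyman--Pearson regions. Since subset probabilities under $X$ are integer multiples of $n^{-k}$, pick $\mathcal{B}'\subseteq\mathcal{E}$ with $\Pr(X\in\mathcal{B}\cup\mathcal{B}')=\underline{p_l}-\delta_l$ (the largest such multiple that is $\le\underline{p_l}\le p_l$) and $\mathcal{C}_s\subseteq\mathcal{E}$ with $\Pr(X\in\mathcal{C}_s)=\overline p_s+\delta_s$ (the smallest multiple $\ge\overline p_s\ge p_s\ge p_j$); one checks these choices are well-defined whenever the constraint in the theorem holds. By construction $\mathcal{B}\cup\mathcal{B}'$ and $\mathcal{C}\cup\mathcal{C}_s$ are (randomization-free) likelihood-ratio regions for the pair $(X,Y)$. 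Since $q_l$ satisfies $\mathbb{E}_X[q_l]=p_l\ge\Pr(X\in\mathcal{B}\cup\mathcal{B}')$, the Neyman--Pearson Lemma gives $p_l'=\mathbb{E}_Y[q_l]\ge\Pr(Y\in\mathcal{B}\cup\mathcal{B}')$; symmetrically, for each $j\ne l$, $\mathbb{E}_X[q_j]=p_j\le\Pr(X\in\mathcal{C}\cup\mathcal{C}_s)$ yields $p_j'\le\Pr(Y\in\mathcal{C}\cup\mathcal{C}_s)$, hence $\max_{j\ne l}p_j'\le\Pr(Y\in\mathcal{C}\cup\mathcal{C}_s)$.

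Finally I would evaluate the two $Y$-probabilities in closed form from $|\mathcal{B}'|=n^k(\underline{p_l}-\delta_l-\Pr(X\in\mathcal{B}))$ and $|\mathcal{C}_s|=n^k(\overline p_s+\delta_s)$, obtaining $\Pr(Y\in\mathcal{B}\cup\mathcal{B}')=(n/n')^k(\underline{p_l}-\delta_l-1+(n^*/n)^k)$ and $\Pr(Y\in\mathcal{C}\cup\mathcal{C}_s)=1-(n^*/n')^k+(n/n')^k(\overline p_s+\delta_s)$. Plugging these into the sufficient condition $\Pr(Y\in\mathcal{B}\cup\mathcal{B}')>\Pr(Y\in\mathcal{C}\cup\mathcal{C}_s)$, multiplying through by $(n'/n)^k>0$ and using $(n^*/n')^k(n'/n)^k=(n^*/n)^k$, rearranges exactly to $(n'/n)^k-2(n^*/n)^k+1-(\underline{p_l}-\overline p_s-\delta_l-\delta_s)<0$. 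Since $t\mapsto t^k$ is increasing and $n^*\ge\max(n,n')-r$, the left-hand side is at most $(n'/n)^k-2((\max(n,n')-r)/n)^k+1-(\underline{p_l}-\overline p_s-\delta_l-\delta_s)$, and the only freedom left to the attacker is $n-r\le n'\le n+r$; so it suffices that the maximum of this expression over $n'$ in that range be negative, which is precisely the constraint defining $r^*$. Taking $r=r^*$ then certifies $h(\mathcal{D}',\mathbf{x})=l$ for every $\mathcal{D}'\in B(\mathcal{D},r^*)$. I expect the main obstacle to be the Neyman--Pearson step: recasting the arbitrary, possibly randomized base learner as a family of randomized tests and verifying that $\mathcal{B}\cup\mathcal{B}'$ and $\mathcal{C}\cup\mathcal{C}_s$ are the genuinely extremal regions for the induced testing problems, together with the discretization bookkeeping forced by $\Pr(X\in\cdot)\in\tfrac{1}{n^k}\mathbb{Z}$ (which is exactly what the residuals $\delta_l,\delta_s$ absorb, and what makes the stated bound a sufficient rather than an exact condition); the algebra reducing the region-probability inequality to the displayed optimization constraint is then routine.
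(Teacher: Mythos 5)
Your proposal is correct and follows essentially the same route as the paper's proof: the same partition of $\Omega$ into $\mathcal{B}$, $\mathcal{C}$, $\mathcal{E}$, the same construction of the residual-adjusted regions $\mathcal{B}'\subseteq\mathcal{E}$ and $\mathcal{C}_s\subseteq\mathcal{E}$, the same application of the Neyman--Pearson lemma (with the constant likelihood ratio $(n'/n)^k$ on $\mathcal{E}$), and the same closed-form evaluation and worst-casing over $m\geq\max(n,n')-r$ and $n'\in[n-r,n+r]$. The only cosmetic difference is that you reuse a single region $\mathcal{C}\cup\mathcal{C}_s$ to bound every $p_j'$ with $j\neq l$, whereas the paper defines a region $\mathcal{C}_j$ per label and then maximizes; both are valid.
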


Given Theorem~\ref{certified_radius_bagging}, we have the following corollaries. 

\begin{corollary}
\label{corollary1}
Suppose a data poisoning attack only modifies existing training examples. Then, we have $n'=n$ and the solution to optimization problem~(\ref{certified_condition_bagging}) is $r^{*}=\lceil n\cdot (1 - \sqrt[k]{1-\frac{\underline{p_l}-\overline{p}_s - \delta_l - \delta_s}{2}}) -1 \rceil$. 
\end{corollary}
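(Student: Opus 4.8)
The plan is to specialize the optimization problem~(\ref{certified_condition_bagging}) of Theorem~\ref{certified_radius_bagging} to the modification-only regime and then solve it in closed form. The first step is to observe that modifying an existing training example changes neither the number of examples nor the intersection size in a way that affects the dataset cardinality, so every $\mathcal{D}' \in B(\mathcal{D},r)$ satisfies $n' = |\mathcal{D}'| = n = |\mathcal{D}|$. Hence the inner maximization $\max_{n-r \le n' \le n+r}(\cdot)$ ranges over the single feasible point $n'=n$ and disappears, and $\max(n,n')=n$. The constraint in~(\ref{certified_condition_bagging}) then collapses to
\begin{align}
2 - 2\left(\tfrac{n-r}{n}\right)^{k} - \left(\underline{p_l}-\overline{p}_s-\delta_l-\delta_s\right) < 0 .
\end{align}

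Next I would isolate $r$. Rearranging gives $\left(\frac{n-r}{n}\right)^{k} > 1 - \frac{\underline{p_l}-\overline{p}_s-\delta_l-\delta_s}{2}$. Since $\underline{p_l}\le 1$, $\overline{p}_s\ge 0$, and $\delta_l,\delta_s\ge 0$, the quantity $\underline{p_l}-\overline{p}_s-\delta_l-\delta_s$ lies in $[0,1]$, so the right-hand side lies in $[\tfrac12,1)$ and both sides are strictly positive; as $t\mapsto t^{1/k}$ is strictly increasing there, taking $k$-th roots is an equivalence, and using $0\le r<n$ (so $n-r>0$) this is equivalent to the linear constraint
\begin{align}
r < n\left(1 - \sqrt[k]{\,1 - \tfrac{\underline{p_l}-\overline{p}_s-\delta_l-\delta_s}{2}\,}\right).
\end{align}

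Finally, $r^{*}$ is by definition the largest integer $r$ satisfying this strict inequality. For any real $c$, the largest integer strictly less than $c$ equals $\lceil c\rceil - 1 = \lceil c-1\rceil$; plugging in $c = n\bigl(1-\sqrt[k]{1-(\underline{p_l}-\overline{p}_s-\delta_l-\delta_s)/2}\bigr)$ yields exactly $r^{*} = \lceil n(1-\sqrt[k]{1-\frac{\underline{p_l}-\overline{p}_s-\delta_l-\delta_s}{2}})-1\rceil$, as claimed. I do not anticipate a serious obstacle: the only points needing care are checking that the $k$-th-root argument stays in $(0,1]$ so the root step is a genuine if-and-only-if (which is ensured by the hypotheses of Theorem~\ref{certified_radius_bagging}, in particular $\underline{p_l}>\overline{p}_s$), and correctly converting ``largest integer with $r<c$'' into a ceiling expression, including the boundary case where $c$ is itself an integer, which the $\lceil c-1\rceil$ form handles automatically.
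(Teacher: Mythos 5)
Your proof is correct and follows the same route the paper intends: specialize the constraint in~(\ref{certified_condition_bagging}) to $n'=n$, rearrange, take $k$-th roots, and convert the strict inequality $r<c$ into $r^*=\lceil c-1\rceil$. The only cosmetic slip is the claim that $\underline{p_l}\le 1$, $\overline{p}_s\ge 0$, $\delta_l,\delta_s\ge 0$ forces $\underline{p_l}-\overline{p}_s-\delta_l-\delta_s\ge 0$ (it does not; nonnegativity instead comes from the algorithm only certifying when $\underline{p_l}-\delta_l>\overline{p}_s+\delta_s$, and when that fails the formula correctly yields a negative $r^*$), but this does not affect the validity of the derivation.
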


\begin{corollary}
\label{corollary2}
Suppose a data poisoning attack only deletes existing training examples.
Then, we have $n'= n - r$ and  $r^{*}=\lceil n\cdot (1 - \sqrt[k]{1-(\underline{p_l}-\overline{p}_s - \delta_l - \delta_s)}) -1 \rceil$. 
\end{corollary}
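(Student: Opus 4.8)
The plan is to specialize the general optimization in (\ref{certified_condition_bagging}) to the deletion-only regime and then solve the resulting single-variable inequality in closed form. First I would record the structural consequence of allowing only deletions: every poisoned dataset satisfies $\mathcal{D}'\subseteq\mathcal{D}$ as multisets, so the overlap is exactly $|\mathcal{D}\cap\mathcal{D}'|=|\mathcal{D}'|=n'$ and $\max(n,n')=n$. A poisoning size of $r$ is achieved by deleting exactly $r$ examples, which forces $n'=n-r$; this is the identity $n'=n-r$ asserted in the statement. Writing $\Delta=\underline{p_l}-\overline{p}_s-\delta_l-\delta_s$ for brevity, the goal is to show the constraint collapses to $\bigl(\tfrac{n-r}{n}\bigr)^k>1-\Delta$ and then invert this for the largest admissible integer $r$.

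The key observation is that deletion makes the two power terms in (\ref{certified_condition_bagging}) coincide. Because the overlap equals $n'$ rather than merely the pessimistic lower bound $\max(n,n')-r$ used in the general theorem, the second term is tight and equals $\bigl(\tfrac{n'}{n}\bigr)^k$, matching the first term. Thus for a poisoned dataset of size $n'=n-d$ obtained by $d\le r$ deletions the certification condition reads
\begin{align}
\bigl(\tfrac{n-d}{n}\bigr)^k-2\bigl(\tfrac{n-d}{n}\bigr)^k+1-\Delta<0 \iff \bigl(\tfrac{n-d}{n}\bigr)^k>1-\Delta .
\end{align}
Since $\bigl(\tfrac{n-d}{n}\bigr)^k$ is strictly decreasing in the deletion count $d$, the binding case over all $n'\in\{n-r,\dots,n\}$ is the full-budget deletion $d=r$, i.e. $n'=n-r$. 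Equivalently, substituting $n'=n-r$ into (\ref{certified_condition_bagging}) uses $\max(n,n')-r=n-r$, at which point the general overlap bound is exact rather than loose, so certifying radius $r$ is equivalent to the single inequality $\bigl(\tfrac{n-r}{n}\bigr)^k>1-\Delta$.

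It remains to solve this for the optimal $r$. Taking $k$-th roots, a monotone operation on nonnegative reals, and rearranging yields $r<n\bigl(1-\sqrt[k]{1-\Delta}\bigr)$. The largest integer strictly below a real $A$ is $\lceil A\rceil-1=\lceil A-1\rceil$, so the optimal value of the argmax is
\begin{align}
r^{*}=\bigl\lceil n\bigl(1-\sqrt[k]{1-\Delta}\bigr)-1\bigr\rceil ,
\end{align}
which is exactly the claimed expression after expanding $\Delta=\underline{p_l}-\overline{p}_s-\delta_l-\delta_s$.

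The main obstacle I anticipate is not the algebra but the justification that the size-range maximum in (\ref{certified_condition_bagging}) genuinely collapses to $n'=n-r$ in the deletion-only model. One must argue both that the full-budget size $n-r$ is the worst case (via strict monotonicity in the deletion count) and that evaluating the general constraint there is tight rather than merely sufficient, which hinges on the coincidence $\max(n,n')-r=n'=n-r$ that makes the two power terms merge with net coefficient $-1$ instead of the $-2$ appearing in the modification-only case. The only remaining point requiring care is the strict-versus-nonstrict inequality in the final ceiling step, so that the boundary integer $r$ is included or excluded correctly.
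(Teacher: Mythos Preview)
Your proposal is correct and is essentially the route the paper intends: specialize the constraint in (\ref{certified_condition_bagging}) to the deletion-only regime, observe that the two power terms collapse, and invert the resulting inequality $\bigl(\tfrac{n-r}{n}\bigr)^k>1-\Delta$ to obtain the stated ceiling formula. Your explicit remark that the exact overlap $m=n'$ (rather than the pessimistic $\max(n,n')-r$) is what makes the coefficients combine to $-1$, and that the binding case is the full-budget deletion $d=r$, supplies the justification the paper leaves implicit when it simply writes ``$n'=n-r$''; otherwise the argument is identical.
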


\begin{corollary}
\label{corollary3}
Suppose a data poisoning attack only inserts new training examples. 
Then, we have $n' = n + r$ and $r^{*}=\lceil n\cdot ( \sqrt[k]{1+(\underline{p_l}-\overline{p}_s - \delta_l - \delta_s)} - 1) -1 \rceil$. 
\end{corollary}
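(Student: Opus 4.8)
The plan is to specialize the optimization problem of Theorem~\ref{certified_radius_bagging} to the insertion-only regime, in which the feasibility constraint simplifies enough to admit a closed-form solution. For brevity write $\Delta := \underline{p_l} - \overline{p}_s - \delta_l - \delta_s$ for the quantity appearing in (\ref{certified_condition_bagging}). First I would pin down the value of $n'$. When the attacker only inserts $r$ fresh examples, the clean dataset is a subset of the poisoned one, $\mathcal{D} \subseteq \mathcal{D}'$, so $|\mathcal{D} \cap \mathcal{D}'| = |\mathcal{D}| = n$ and $|\mathcal{D}'| = n + r$. Hence every admissible poisoned dataset satisfies $\max\{|\mathcal{D}|,|\mathcal{D}'|\} - |\mathcal{D} \cap \mathcal{D}'| = (n+r) - n = r$, consistent with a poisoning size of $r$, and the size is forced to the single value $n' = n + r$. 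Consequently the maximization over $n' \in [n-r,\,n+r]$ in (\ref{certified_condition_bagging}) is vacuous: it is evaluated at the unique admissible point $n' = n + r$.

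Next I would substitute $n' = n + r$ and simplify. Since $r \ge 0$ gives $\max(n, n') = n + r$, the subtracted term becomes $(\tfrac{\max(n,n')-r}{n})^{k} = (\tfrac{n}{n})^{k} = 1$, so the constraint collapses to
\begin{align}
\left(\frac{n+r}{n}\right)^{k} - 2 + 1 - \Delta < 0, \qquad \text{i.e.,} \qquad \left(\frac{n+r}{n}\right)^{k} < 1 + \Delta. \nonumber
\end{align}
Because $r \mapsto (\tfrac{n+r}{n})^{k}$ is strictly increasing on $r \ge 0$, the feasible set of the optimization problem is an interval of the form $\{r \ge 0 : r < t\}$; taking the $k$-th root of both (positive) sides identifies the threshold as $t = n\bigl(\sqrt[k]{1+\Delta} - 1\bigr)$.

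Finally I would read off $r^{*}$ as the largest integer strictly below $t$. The largest integer $r$ with $r < t$ equals $\lceil t \rceil - 1$, and since subtracting the integer $1$ commutes with the ceiling, $\lceil t \rceil - 1 = \lceil t - 1 \rceil$; substituting $t$ yields exactly $r^{*} = \lceil n(\sqrt[k]{1+\Delta} - 1) - 1 \rceil$, as claimed. The only point requiring care — and the closest thing to an obstacle in an otherwise routine reduction — is this final integer bookkeeping: one must respect the strictness of the inequality (so that the endpoint is excluded) and verify the ceiling identity in both the integer and non-integer cases of $t$ to avoid an off-by-one error. Monotonicity of the constraint guarantees that the $\argmax$ is well defined and attained at this $r^{*}$.
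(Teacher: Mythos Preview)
Your proposal is correct and follows the natural specialization that the paper intends: fix $n'=n+r$ in the constraint~(\ref{certified_condition_bagging}), observe that $\max(n,n')-r=n$ so the middle term collapses to $2$, and solve the resulting monotone inequality $\bigl(\tfrac{n+r}{n}\bigr)^{k}<1+\Delta$ for the largest admissible integer $r$. The only minor wording quibble is that $n'=n+r$ is not the ``unique admissible'' size under at most $r$ insertions (any $n'\in[n,n+r]$ is admissible) but rather the worst case by monotonicity; your computation is unaffected either way.
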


The next theorem shows that our derived certified poisoning size is tight. 

\begin{theorem}[Tightness of the Certified Poisoning Size]
\label{tightness_theorem}
Assuming we have $\underline{p_l} + \overline{p}_s  \leq 1$, $\underline{p_l}  +(c-1)\cdot \overline{p}_s \geq 1$, and $ \delta_l = \delta_s =0$. Then, for any $r>r^{*}$, there exist a base learning algorithm $\mathcal{A}^{*}$ consistent with~(\ref{equation_of_condition_probability_bagging}) and a poisoned training dataset  $\mathcal{D}^{\prime}$ with $r$ poisoned training examples such that $h(\mathcal{D}',\mathbf{x})\neq l$
or there exist ties. 
\end{theorem}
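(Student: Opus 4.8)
The plan is to exploit the optimization that defines $r^{*}$ by contraposition and then exhibit an explicit worst case. Since (with $\delta_l=\delta_s=0$) the constraint in~(\ref{certified_condition_bagging}) holds for $r^{*}$ but fails for every larger value, picking $r>r^{*}$ produces an integer $n^{*}$ with $n-r\le n^{*}\le n+r$ such that
\[
\left(\frac{n^{*}}{n}\right)^{k}-2\left(\frac{\max(n,n^{*})-r}{n}\right)^{k}+1-(\underline{p_l}-\overline{p}_s)\ \ge\ 0.\quad(\star)
\]
First I would fix such an $n^{*}$ and take a poisoned dataset $\mathcal{D}^{\prime}$ with $|\mathcal{D}^{\prime}|=n^{*}$ and $|\mathcal{D}\cap\mathcal{D}^{\prime}|=\max(n,n^{*})-r$ (keep $\max(n,n^{*})-r$ of the original examples, modify/delete/insert the rest). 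This $\mathcal{D}^{\prime}$ lies in ${B}(\mathcal{D},r)$ and realizes exactly $r$ poisoning operations; I would note the handful of degenerate parameter ranges where no such $\mathcal{D}^{\prime}$ exists and argue those separately, as $r^{*}$ is typically already trivial there.

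Next I would reuse the partition of the subsample space from the proof of Theorem~\ref{certified_radius_bagging}: $\mathcal{B}$ = subsamples of $\mathcal{D}$ touching $\mathcal{D}\setminus\mathcal{D}^{\prime}$ (reachable only by $X$), $\mathcal{C}$ = the analogue for $\mathcal{D}^{\prime}$ (reachable only by $Y$), and $\mathcal{E}$ = subsamples drawn entirely from $\mathcal{D}\cap\mathcal{D}^{\prime}$. Counting lists gives $\text{Pr}(X\in\mathcal{E})=\alpha:=\left(\frac{\max(n,n^{*})-r}{n}\right)^{k}$ and $\text{Pr}(Y\in\mathcal{E})=\beta:=\left(\frac{\max(n,n^{*})-r}{n^{*}}\right)^{k}$, and on $\mathcal{E}$ both laws are uniform with constant likelihood ratio $\text{Pr}(Y\in\cdot)/\text{Pr}(X\in\cdot)=(n/n^{*})^{k}$, so $\beta\,(n^{*}/n)^{k}=\alpha$. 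The crux is to define the adversarial base learner $\mathcal{A}^{*}$ as the \emph{exact} Neyman--Pearson extremizer: it outputs $l$ on $\mathcal{B}$ and on a sub-region $\mathcal{B}^{\prime}\subseteq\mathcal{E}$ with $\text{Pr}(X\in\mathcal{B}^{\prime})=\underline{p_l}-(1-\alpha)$; it outputs a fixed label $s\neq l$ on all of $\mathcal{C}$ and on a disjoint $\mathcal{C}_s\subseteq\mathcal{E}$ with $\text{Pr}(X\in\mathcal{C}_s)=\overline{p}_s$; and it spreads the leftover $\mathcal{E}$-mass $1-\underline{p_l}-\overline{p}_s$ over the other $c-2$ labels, each getting at most $\overline{p}_s$. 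Here $\delta_l=\delta_s=0$ makes the prescribed probabilities integer multiples of $1/n^{k}$ (hence attainable by subsets of $\Omega$), $\underline{p_l}+\overline{p}_s\le 1$ makes $\mathcal{B}^{\prime}$ and $\mathcal{C}_s$ fit disjointly inside $\mathcal{E}$, and $\underline{p_l}+(c-1)\overline{p}_s\ge 1$ makes the leftover spreadable without any label exceeding $\overline{p}_s$; thus $\mathcal{A}^{*}$ is consistent with~(\ref{equation_of_condition_probability_bagging}) and $h(\mathcal{D},\mathbf{x})=l$. (If $\underline{p_l}<1-\alpha$, i.e.\ $\mathcal{B}$ is already oversized, I would instead label only a measure-$\underline{p_l}$ subset of $\mathcal{B}$ with $l$ and put everything else on the non-$l$ labels, so $p_l^{\prime}=0$ and validity still follows from $\underline{p_l}+(c-1)\overline{p}_s\ge 1$.)

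Finally I would evaluate $\mathcal{A}^{*}$ on $\mathcal{D}^{\prime}$ straight from the partition. Since $Y$ never lands in $\mathcal{B}$, $p_l^{\prime}=\text{Pr}(Y\in\mathcal{B}\cup\mathcal{B}^{\prime})=(n/n^{*})^{k}\bigl(\underline{p_l}-(1-\alpha)\bigr)=(n/n^{*})^{k}(\underline{p_l}-1)+\beta$; and since $X$ never lands in $\mathcal{C}$, $p_s^{\prime}=\text{Pr}(Y\in\mathcal{C}\cup\mathcal{C}_s)=(1-\beta)+(n/n^{*})^{k}\overline{p}_s$, while every other non-$l$ label meets only $\mathcal{E}$ and so has $Y$-probability $(n/n^{*})^{k}p_j\le(n/n^{*})^{k}\overline{p}_s\le p_s^{\prime}$; hence $\max_{j\neq l}p_j^{\prime}=p_s^{\prime}$. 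Forming $p_s^{\prime}-p_l^{\prime}=1-2\beta+(n/n^{*})^{k}(1+\overline{p}_s-\underline{p_l})$, multiplying by the positive factor $(n^{*}/n)^{k}$ and substituting $\beta\,(n^{*}/n)^{k}=\alpha$, turns $p_s^{\prime}-p_l^{\prime}\ge 0$ into exactly $(\star)$. Therefore $p_l^{\prime}\le\max_{j\neq l}p_j^{\prime}$, i.e.\ $h(\mathcal{D}^{\prime},\mathbf{x})\neq l$ or there is a tie, which is the assertion. I expect the main obstacle to be the second step: verifying that this extremal $\mathcal{A}^{*}$ is a legitimate (randomized) learning algorithm — checking realizability of the region sizes, disjointness of the $\mathcal{E}$-sub-regions, and dispatching the boundary regimes above; the remaining algebra is just bookkeeping already set up for Theorem~\ref{certified_radius_bagging}.
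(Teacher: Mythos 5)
Your proposal is correct and follows essentially the same route as the paper's proof: negate the constraint defining $r^{*}$ to obtain a violating $n'$, build the Neyman--Pearson extremal classifier on the partition $\mathcal{B},\mathcal{C},\mathcal{E}$ (label $l$ on $\mathcal{B}\cup\mathcal{B}^{\prime}$, label $s$ on $\mathcal{C}\cup\mathcal{C}_s$, spread the remainder over the other $c-2$ labels using the two probability-sum assumptions), and observe that $p_l^{\prime}\leq p_s^{\prime}$ reduces exactly to the negated constraint. Your explicit handling of the realizability of $\mathcal{D}^{\prime}$ and of the degenerate regime $\underline{p_l}<1-(\frac{m}{n})^{k}$ is somewhat more careful than the paper, which leaves those points implicit.
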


We have several remarks about our theorems. 

\myparatight{Remark 1} Our Theorem~\ref{certified_radius_bagging} is applicable for any base learning algorithm $\mathcal{A}$, i.e., bagging with any base learning algorithm is provably robust against data poisoning attacks.

\myparatight{Remark 2} For any lower bound $\underline{p_l}$  of the largest label probability and upper bound $\overline{p}_s$ of the second largest label probability, Theorem~\ref{certified_radius_bagging} derives a certified poisoning size. Moreover, our certified poisoning size is related to the gap between the two probability bounds.  If we can estimate tighter probability bounds, then the certified poisoning size may be larger. 

\myparatight{Remark 3} Theorem~\ref{tightness_theorem} shows that when no assumptions on the base learning algorithm are made, it is impossible to certify a poisoning size that is larger than ours.

\section{Computing the Certified Poisoning Size}

Given a base learning algorithm $\mathcal{A}$, a training dataset $\mathcal{D}$, subsampling size $k$, and $e$ testing examples in $\mathcal{D}_e$, we aim to compute the label ${l}_i$ predicted by the ensemble classifier and the corresponding certified poisoning size ${r}_i^*$ for each testing example $\mathbf{x}_i$. For a testing example $\mathbf{x}_i$, our certified poisoning size relies on a lower bound $\underline{p_{{l}_i}}$ of the largest label probability and an upper bound $\overline{p}_{{s}_i}$ of the second largest label probability. We design a Monte-Carlo algorithm to estimate the probability bounds for the $e$ testing examples simultaneously via training $N$ base classifiers.  
Next, we first describe estimating the probability bounds. Then, we describe our efficient algorithm to solve the optimization problem in~(\ref{certified_condition_bagging}) with the estimated probability bounds to compute the certified poisoning sizes.

\myparatight{Computing the predicted label and probability bounds for one testing example} We first discuss estimating the predicted label ${l}_i$ and probability bounds $\underline{p_{{l}_i}}$ and $\overline{p}_{{s}_i}$ for one testing example $\mathbf{x}_i$. We first randomly sample $N$ subsamples $\mathcal{L}_1,\mathcal{L}_2,\cdots,\mathcal{L}_{N}$ from $\mathcal{D}$ with replacement, each of which has $k$ training examples. Then, we train a base classifier $f_o$ for each subsample $\mathcal{L}_o$ using the base learning algorithm $\mathcal{A}$, where $o=1,2,\cdots,N$. We use the base classifiers to predict labels for $\mathbf{x}_i$, and we denote by $N_j$ the frequency of label $j$, i.e., $N_j$ is the number of base classifiers that predict label $j$ for $\mathbf{x}_i$. We estimate the label with the largest frequency as the label $l_i$ predicted by the ensemble classifier $h$ for $\mathbf{x}_i$. Moreover, based on the definition of label probability, the frequency $N_{j}$ of the label $j$ among the $N$ base classifiers follows a binomial distribution with parameters $N$ and $p_j$. Therefore, given the label frequencies, we can use the Clopper-Pearson~\citep{clopper1934use}  based method called {SimuEM}~\citep{jia2020certified} to estimate the following probability bounds simultaneously:  
{\small
\begin{align}
\label{estimate1}
    \underline{p_{l_i}}&=\text{Beta}(\frac{\alpha}{c};N_{l_i},N-N_{l_i}+1) \\
    \label{estimate2}
   \overline{p}_j &= \text{Beta}(1-\frac{\alpha}{c};N_j,N-N_j+1), \forall j \neq l_i,
\end{align}
}
where $1 - \alpha$ is the confidence level and $\text{Beta}(\beta;\lambda,\theta)$ is the $\beta$th quantile of the Beta distribution with shape parameters $\lambda$ and $\theta$. One natural method  to estimate $\overline{p}_{s_i}$ is that $\overline{p}_{s_i} = \max_{j\neq l_i}\overline{p}_j$. However, this bound may be loose. For example, $\underline{p_{l_i}}+\overline{p}_{s_i}$ may be larger than 1. Therefore, we estimate $\overline{p}_{s_i}$ as $\overline{p}_{s_i} = \min (\max_{j\neq l_i}\overline{p}_j, 1 - \underline{p_{l_i}})$.

\myparatight{Computing the predicted labels and probability bounds for $e$ testing examples} One way of estimating the predicted labels and probability bounds for $e$ testing examples is to apply the above process for each testing example separately. However, such process requires training $N$ base classifiers for each testing example, which is computationally intractable. To address the challenge, we propose a method to estimate them for $e$ testing examples simultaneously via training $N$ base classifiers in total.  Our key idea is to divide the confidence level among the $e$ testing examples such that we can estimate their predicted labels and probability bounds using the same $N$ base classifiers with a simultaneous confidence level at least $1-\alpha$. Specifically, we still use the $N$ base classifiers to predict the label for each testing example as we described above. Then, we follow the above process to estimate the probability bounds  $\underline{p_{{l}_i}}$ and $\overline{p}_{{s}_i}$ for a testing example $\mathbf{x}_i$ via replacing $\alpha$ as $\alpha/e$ in Equation (\ref{estimate1}) and (\ref{estimate2}). Based on the \emph{Bonferroni correction}, the simultaneous confidence level of estimating the probability bounds for the $e$ testing examples is at least $1-\alpha$.

\myparatight{Computing the certified poisoning sizes} Given the estimated probability bounds $\underline{p_{{l}_i}}$ and $\overline{p}_{{s}_i}$ for a testing example $\mathbf{x}_i$, we solve the optimization problem in~(\ref{certified_condition_bagging}) to obtain its certified poisoning size $r_{i}^*$. We design an efficient binary search based method to solve $r_{i}^*$. 
Specifically, we use binary search to find the largest $r$ such that the constraint in (\ref{certified_condition_bagging}) is satisfied. We denote the left-hand side of the constraint as $\max_{n-r \leq n' \leq n+r} L(n)$. For a given $r$, a naive way to check whether the constraint $\max_{n-r \leq n' \leq n+r} L(n')<0$ holds is to check whether $L(n')<0$ holds for each $n'$ in the range $[n-r, n+r]$, which could be inefficient when $r$ is large. To reduce the computation cost, we derive an analytical form of $n'$ at which $L(n')$ reaches its maximum value. Our analytical form enables us to only check whether $L(n')<0$ holds for at most two different $n'$ for a given $r$. The  details of deriving  the analytical form are shown in Supplemental Material. 

\begin{algorithm}[tb]
   \caption{\textsc{Certify}}
   \label{alg:certify}
\begin{algorithmic}
   \STATE {\bfseries Input:} $\mathcal{A}$, $\mathcal{D}$, $k$, $N$, $\mathcal{D}_e$, $\alpha$.
   \STATE {\bfseries Output:} Predicted label and certified poisoning size for each testing example. \\
   $f_1,f_2,\cdots,f_N \gets  \textsc{TrainUnderSample}(\mathcal{A},\mathcal{D},k,N)$ \\
   \FOR{$\mathbf{x}_i$ {\bfseries in} $\mathcal{D}_e$}
   \STATE counts$[j] \gets \sum_{o=1}^{N}\mathbb{I}(f_{o}(\mathbf{x}_i)=j), j\in \{1,2,\cdots,c\} $ \\
   \STATE $l_i,s_i \gets$ top two indices in counts (ties are broken uniformly at random). \\
   \STATE $\underline{p_{l_i}}, \overline{p}_{s_i} \gets \textsc{SimuEM}(\text{counts},\frac{\alpha}{e})$ \\
   \IF{$\underline{p_{l_i}}> \overline{p}_{s_i}$}
   \STATE $r_i^* \gets \textsc{BinarySearch} (\underline{p_{l_i}}, \overline{p}_{s_i}, k, |\mathcal{D}|)$ 
   \ELSE
   \STATE $l_i,r_i^* \gets \text{ABSTAIN},\text{ABSTAIN}$ 
   \ENDIF
   \ENDFOR
  \STATE \textbf{return} $l_1,l_2,\cdots, l_e$ and $r_1^*,r_2^*,\cdots, r_e^*$
\end{algorithmic}
\end{algorithm}

\myparatight{Complete certification algorithm}
Algorithm~\ref{alg:certify} shows our certification process to estimate the predicted labels and certified poisoning sizes for $e$ testing examples in $\mathcal{D}_e$. The function \textsc{TrainUnderSample} randomly samples $N$ subsamples and trains $N$ base classifiers. The function \textsc{SimuEM} estimates the probability bounds $\underline{p_{l_i}}$ and $\overline{p}_{s_i}$ with confidence level $1-\frac{\alpha}{e}$. The function \textsc{BinarySearch} solves the optimization problem in (\ref{certified_condition_bagging}) using the estimated probability bounds $\underline{p_{l_i}}$ and $\overline{p}_{s_i}$ to obtain the certified poisoning size $r_i^*$ for testing example $\mathbf{x}_i$. 

Since the probability bounds are estimated using a Monte Carlo algorithm, they may be estimated incorrectly, i.e., $\underline{p_{l_i}}> p_{l_i}$ or  $\overline{p}_{s_i} < {p}_{s_i}$. When they are estimated incorrectly, our algorithm \textsc{Certify} may output an incorrect certified poisoning size. However,  the following theorem shows that 
the probability that \textsc{Certify} returns an incorrect certified poisoning size for at least one testing example is at most $\alpha$. 
\begin{theorem}
\label{proposition_of_certify}
The probability that $\textsc{Certify}$ returns an incorrect certified poisoning size for at least one testing example in $\mathcal{D}_e$ is at most $\alpha$, i.e., we have: 
{\small
\begin{align}
     &\text{Pr}(\cap_{\mathbf{x}_i \in \mathcal{D}_e} ((\forall \mathcal{D}'\in B(\mathcal{D}, r_i^*), h(\mathcal{D}',\mathbf{x}_i)=l_i)|l_i \neq \text{ABSTAIN} ))    \geq 1 -\alpha.
\end{align}
}
\end{theorem}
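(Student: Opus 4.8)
The plan is to isolate the single source of randomness in \textsc{Certify} that can make its output wrong --- the $N$ random subsamples together with the (possibly randomized) base learner $\mathcal{A}$ --- show that for each testing example the ``bad event'' that its Monte-Carlo probability bounds are invalid has probability at most $\alpha/e$, and then take a union bound over the $e$ testing examples in $\mathcal{D}_e$. Tie-breaking in the choice of $l_i$ and $s_i$ is extra internal randomness, but it will turn out to be irrelevant to the estimate.

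First I would fix a testing example $\mathbf{x}_i$ and write $N_j$ for the number of the $N$ base classifiers that predict label $j$ for $\mathbf{x}_i$, so that $N_j\sim\mathrm{Binomial}(N,p_j)$ by the definition of the label probability $p_j$. Define the bad event $\mathcal{E}_i=\{\underline{p_{l_i}}>p_{l_i}\}\cup\{\exists j\neq l_i:\ \overline{p}_j<p_j\}$, i.e.\ the event that the Clopper--Pearson bounds of \eqref{estimate1}--\eqref{estimate2}, evaluated with parameter $\alpha/e$, fail to be valid one-sided confidence bounds. Each of these $c$ bounds is a Clopper--Pearson bound at confidence level $1-\alpha/(ec)$, so the internal Bonferroni correction of \textsc{SimuEM} (a union bound over the $c$ labels) gives $\mathrm{Pr}(\mathcal{E}_i)\le\alpha/e$.

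Next I would show that on $\mathcal{E}_i^c$ the output of \textsc{Certify} for $\mathbf{x}_i$ is correct. If the algorithm abstains there is nothing to prove, so assume $\underline{p_{l_i}}>\overline{p}_{s_i}$. On $\mathcal{E}_i^c$ we have $p_{l_i}\ge\underline{p_{l_i}}$ and $\overline{p}_j\ge p_j$ for all $j\neq l_i$; combining this with $\overline{p}_{s_i}=\min(\max_{j\neq l_i}\overline{p}_j,\ 1-\underline{p_{l_i}})$ and the elementary bound $p_{l_i}+\max_{j\neq l_i}p_j\le\sum_j p_j=1$ yields $\overline{p}_{s_i}\ge\max_{j\neq l_i}p_j$. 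Chaining, $p_{l_i}\ge\underline{p_{l_i}}>\overline{p}_{s_i}\ge\max_{j\neq l_i}p_j$, so $l_i$ really is the label with the largest label probability and $\max_{j\neq l_i}p_j$ the second largest; hence the triple $(l_i,\underline{p_{l_i}},\overline{p}_{s_i})$ satisfies hypothesis \eqref{equation_of_condition_probability_bagging}. Since \textsc{BinarySearch} returns exactly the solution $r_i^*$ of \eqref{certified_condition_bagging} for these bounds, Theorem~\ref{certified_radius_bagging} gives $h(\mathcal{D}',\mathbf{x}_i)=l_i$ for all $\mathcal{D}'\in B(\mathcal{D},r_i^*)$. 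Thus $\mathcal{E}_i^c$ is contained in the event $A_i:=\{l_i=\text{ABSTAIN}\}\cup\{\forall\mathcal{D}'\in B(\mathcal{D},r_i^*):\ h(\mathcal{D}',\mathbf{x}_i)=l_i\}$ that appears in the stated intersection.

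Finally, a union bound closes the proof: $\mathrm{Pr}(\cap_i A_i)\ge\mathrm{Pr}(\cap_i\mathcal{E}_i^c)=1-\mathrm{Pr}(\cup_i\mathcal{E}_i)\ge 1-\sum_{i=1}^{e}\mathrm{Pr}(\mathcal{E}_i)\ge 1-e\cdot(\alpha/e)=1-\alpha$. The substantive step is the third paragraph rather than the union bound: one must verify on the good event that the \emph{empirically} chosen label $l_i$ coincides with the population $\argmax$ whenever the algorithm does not abstain (otherwise Theorem~\ref{certified_radius_bagging} does not apply with $l=l_i$), and that the clipping $\overline{p}_{s_i}=\min(\cdot,\,1-\underline{p_{l_i}})$ does not destroy the upper-bound property. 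Both are short but must be done carefully, and the bookkeeping of the two nested Bonferroni corrections (the $1/c$ inside \textsc{SimuEM} and the $1/e$ across testing examples) has to line up to give exactly $\alpha$.
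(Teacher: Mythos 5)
Your proposal is correct and follows essentially the same route as the paper's proof: a per-example failure probability of $\alpha/e$ inherited from the \textsc{SimuEM} (Clopper--Pearson plus Bonferroni) guarantee, followed by Boole's inequality over the $e$ testing examples. The only difference is that you explicitly verify the intermediate step the paper leaves implicit --- that on the good event the empirical $l_i$ is the true $\argmax$ and the clipped $\overline{p}_{s_i}$ still upper-bounds $\max_{j\neq l_i}p_j$, so Theorem~\ref{certified_radius_bagging} applies --- which is a welcome addition but not a different argument.
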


\section{Experiments}
\subsection{Experimental Setup}
\noindent
{\bf Datasets and classifiers:} We use MNIST and CIFAR10 datasets. The base learning algorithm is neural network, and we use the example convolutional neural network architecture
and ResNet20~\citep{he2016deep}
in Keras for MNIST and CIFAR10, respectively. The number of training examples in the two datasets are $60,000$ and $50,000$, respectively, which are the training datasets that we aim to certify. Both datasets have 10,000 testing examples, which are the $\mathcal{D}_e$ in our algorithm. When we train a base classifier, we adopt the example data augmentation in Keras for both datasets.

\myparatight{Evaluation metric} We use  \emph{certified accuracy} as our evaluation metric. In particular, we define the certified accuracy at $r$ poisoned training examples of a classifier as the fraction of testing examples whose labels are correctly predicted by the classifier and whose certified poisoning sizes are at least $r$. Formally, we have  the certified accuracy $CA_r$ at $r$ poisoned training examples as follows:
\begin{align}
    CA_r=\frac{\sum_{\mathbf{x}_i \in \mathcal{D}_e} \mathbb{I}(l_i=y_i)\cdot \mathbb{I}(r_{i}^{*}\geq r)}{|\mathcal{D}_e|},
\end{align} 
where $y_i$ is the ground truth label for testing example $\mathbf{x}_i$, and $l_i$ and $r_{i}^{*}$ respectively are the label predicted by the classifier and the corresponding certified poisoning size for $\mathbf{x}_i$. Intuitively,
$CA_r$ of a classifier means that, when the number of poisoned training examples is $r$, the classifier's testing accuracy for $\mathcal{D}_e$ is at least $CA_r$ no matter how the attacker manipulates the $r$ poisoned training examples. Based on Theorem~\ref{proposition_of_certify}, the $CA_r$ computed using the predicted labels and certified poisoning sizes outputted by our \textsc{Certify} algorithm has a confidence level $1-\alpha$. 

\myparatight{Parameter setting} Our method has three parameters, i.e., $k$, $\alpha$, and $N$. Unless otherwise mentioned, we adopt the following default settings for them:  $\alpha = 0.001$,  $N=1,000$, $k=30$ for MNIST, and $k=500$ for CIFAR10.  We will study the impact of each parameter while setting the remaining parameters to their default values. Note that training the $N$ base classifiers can be easily parallelized.  We performed experiments on a server with 80 CPUs@2.1GHz, 8 GPUs (RTX 6,000), and 385 GB main memory.

\begin{figure*}[!t]
    \center
    {\includegraphics[width=0.24\textwidth]{./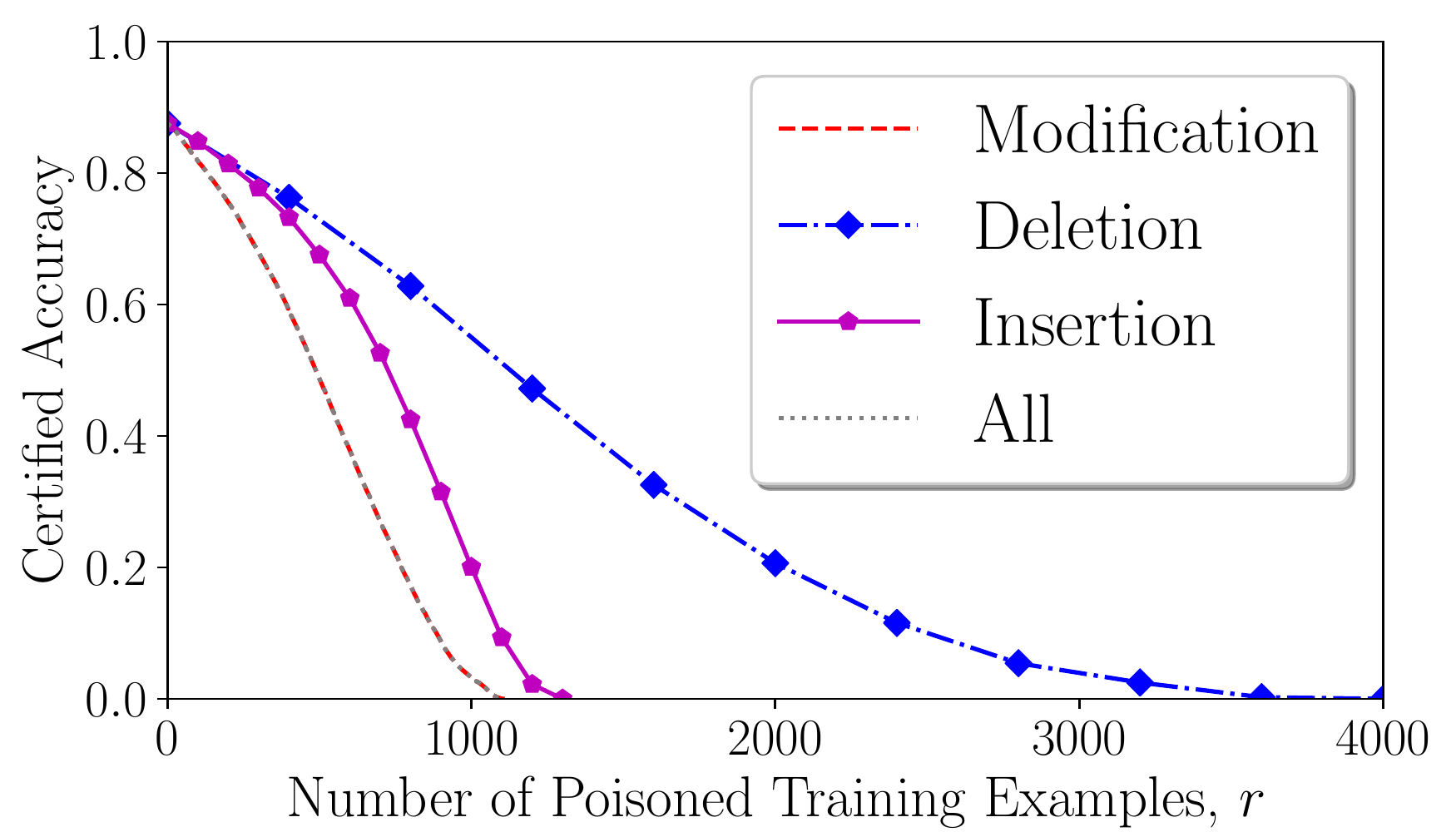}} 
    {\includegraphics[width=0.24\textwidth]{./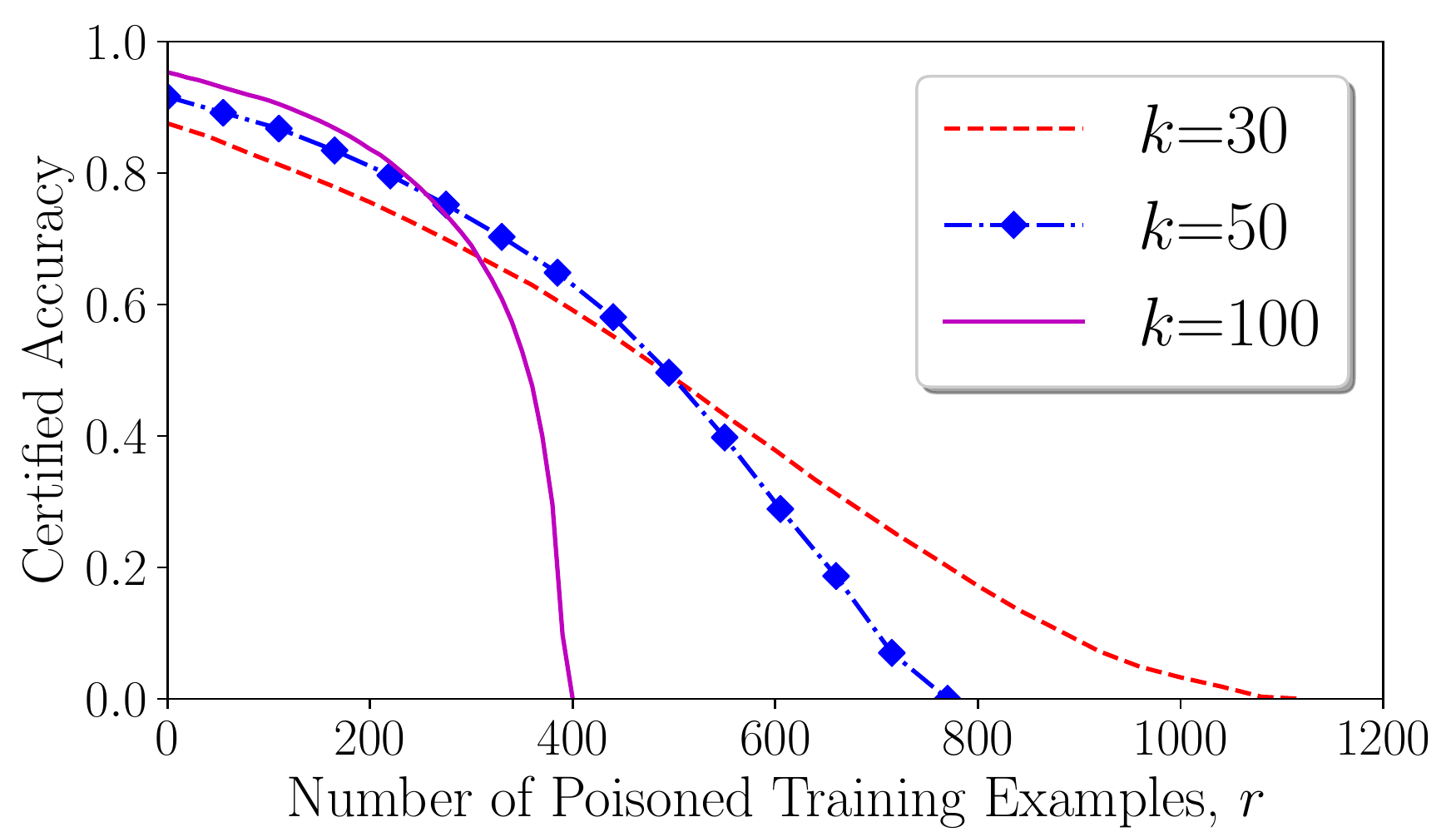}}
    {\includegraphics[width=0.24\textwidth]{./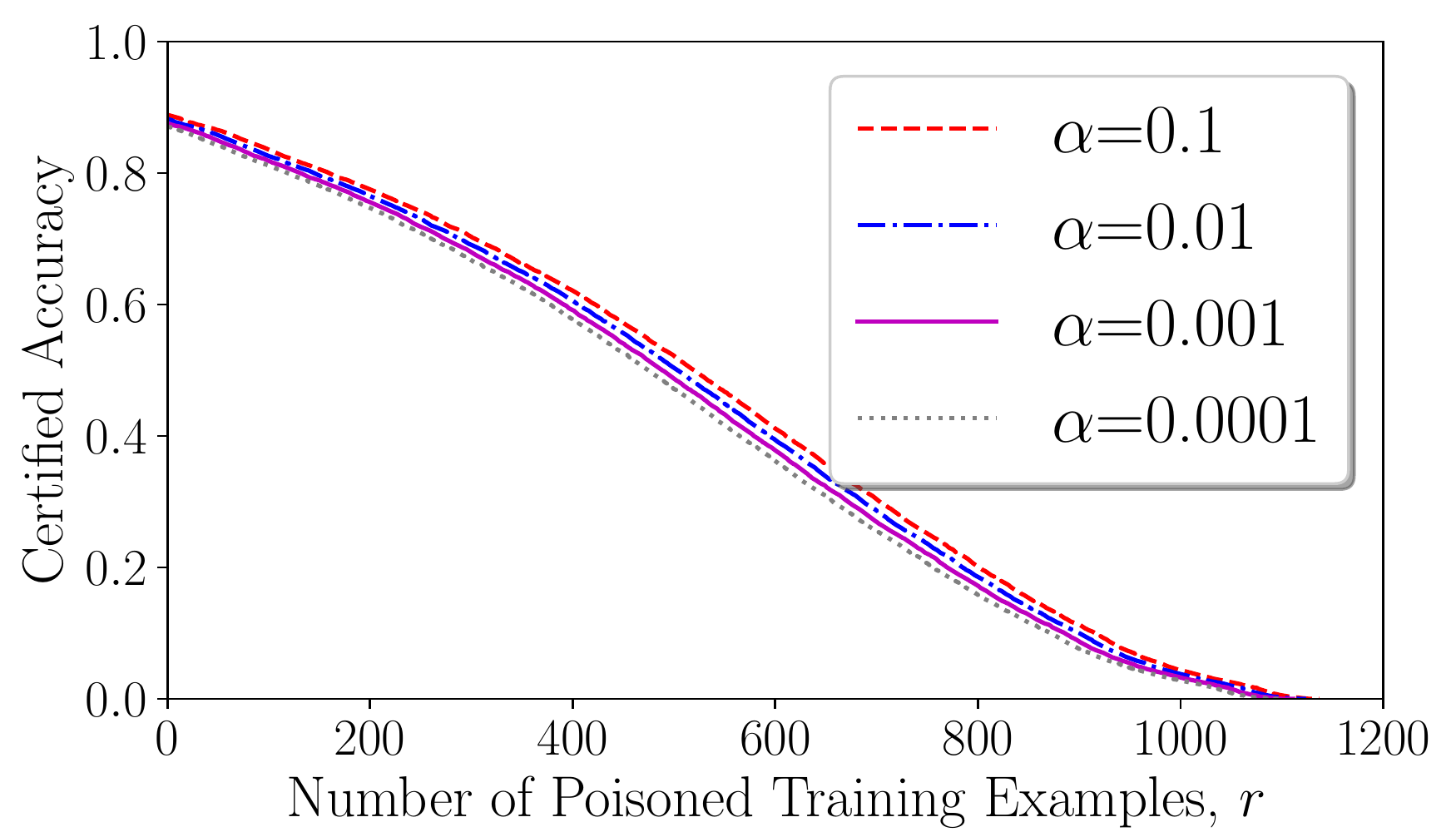}}
    {\includegraphics[width=0.24\textwidth]{./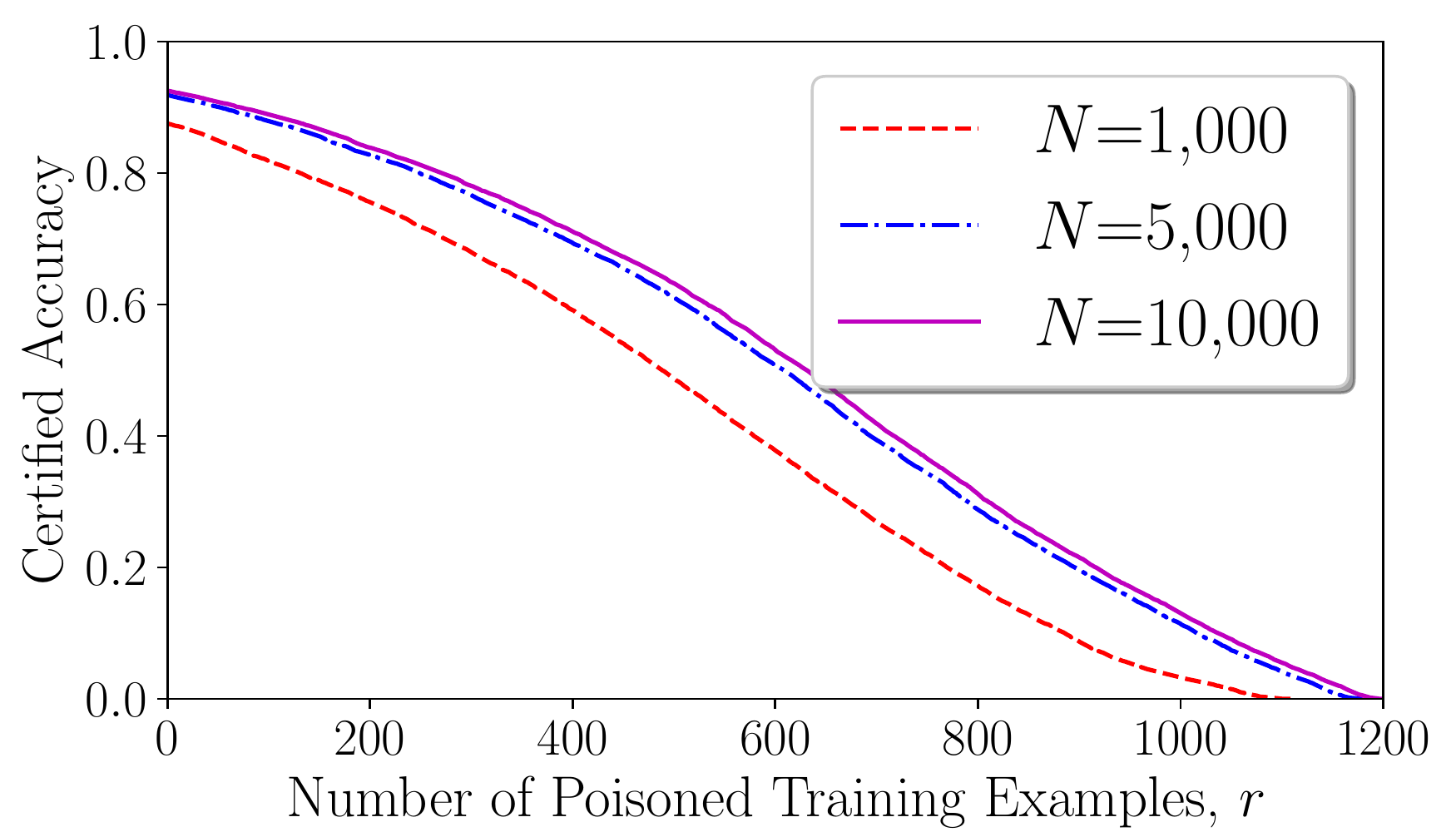}}\\
    \subfigure[Different attacks]{\includegraphics[width=0.24\textwidth]{./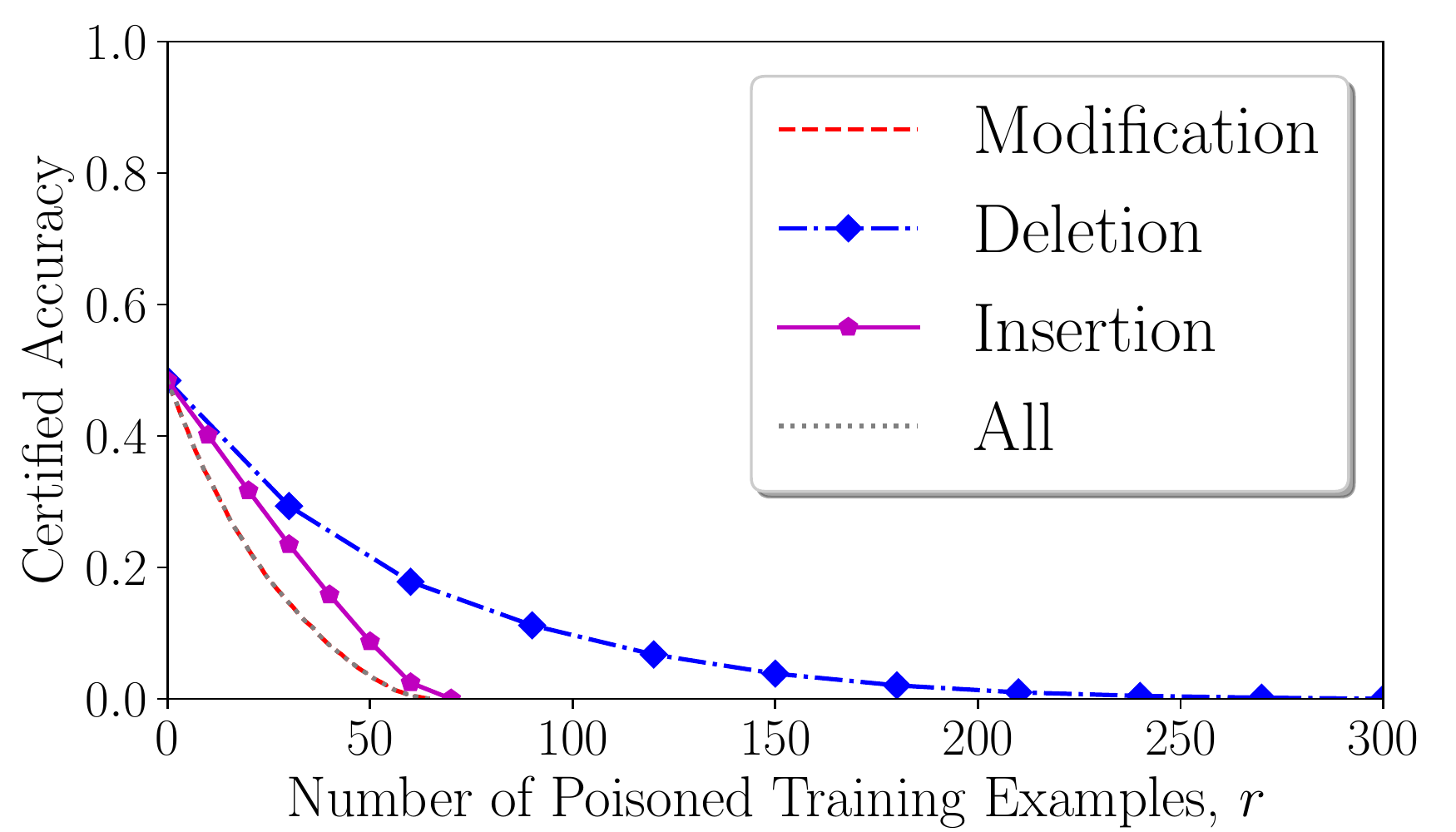}\label{fig:vary_case}}
    \subfigure[Impact of $k$]{\includegraphics[width=0.24\textwidth]{./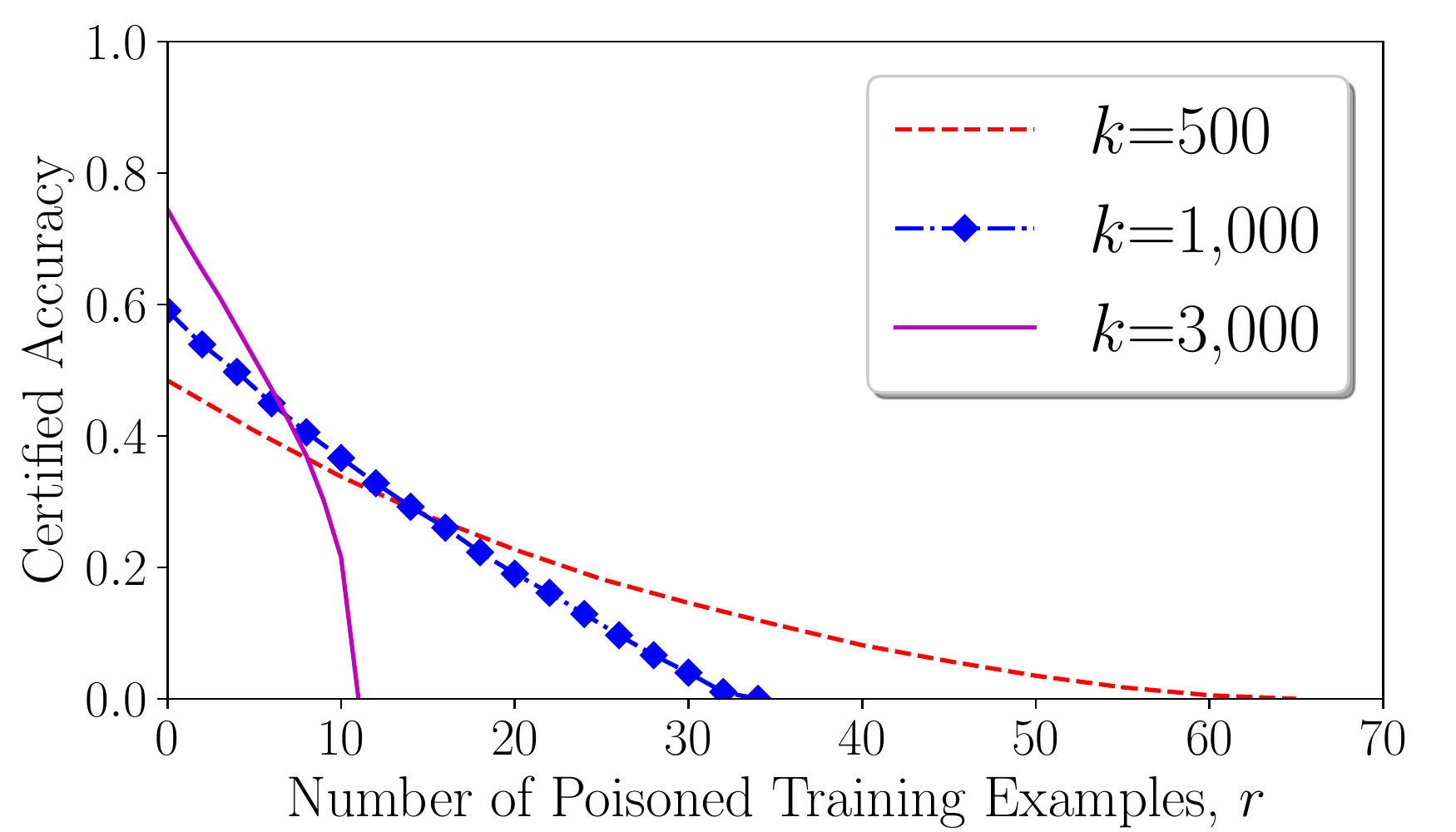}\label{fig:vary_k}}  
    \subfigure[Impact of $\alpha$]{\includegraphics[width=0.24\textwidth]{./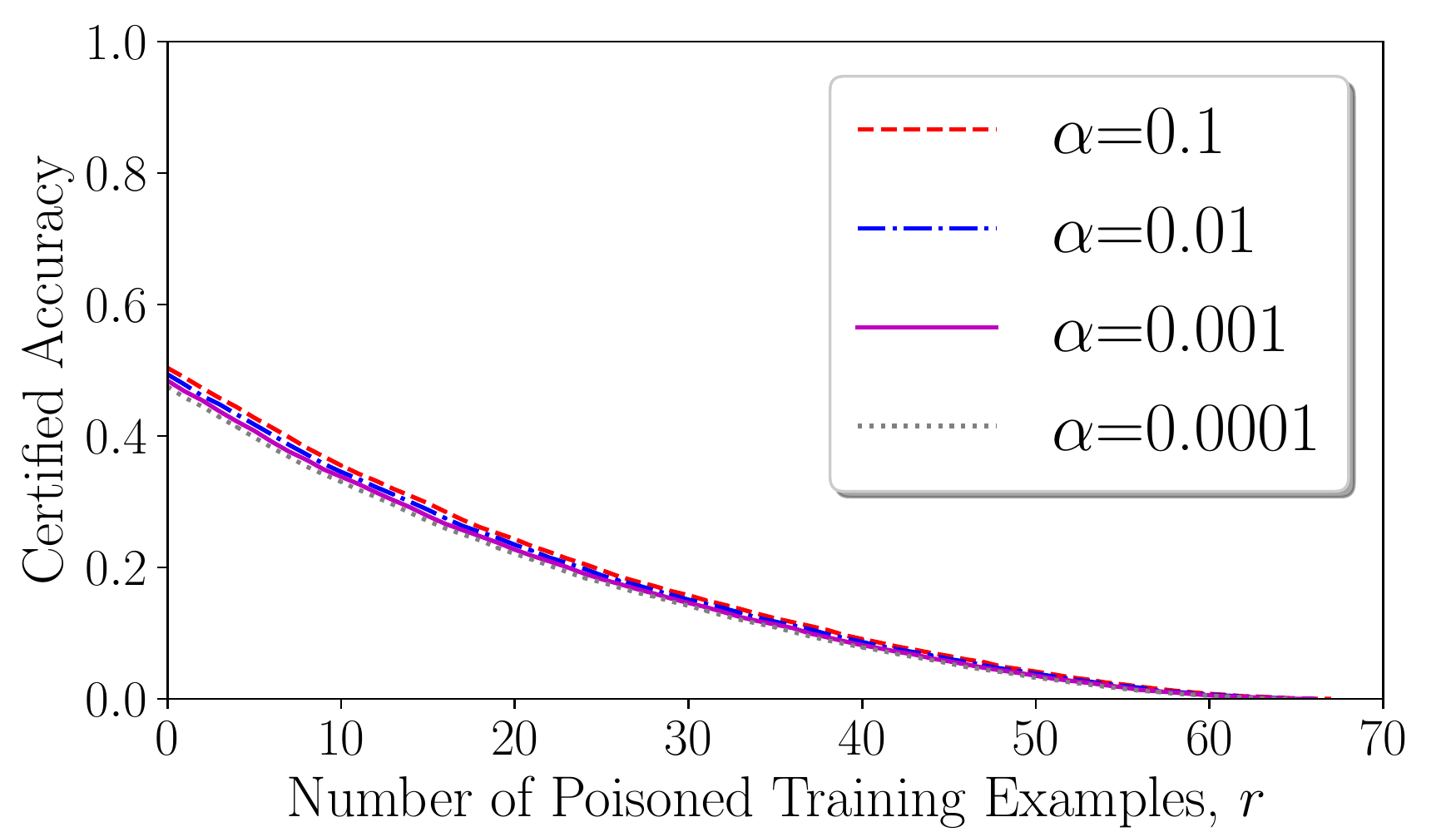}\label{fig:vary_alpha}}
    \subfigure[Impact of $N$]{\includegraphics[width=0.24\textwidth]{./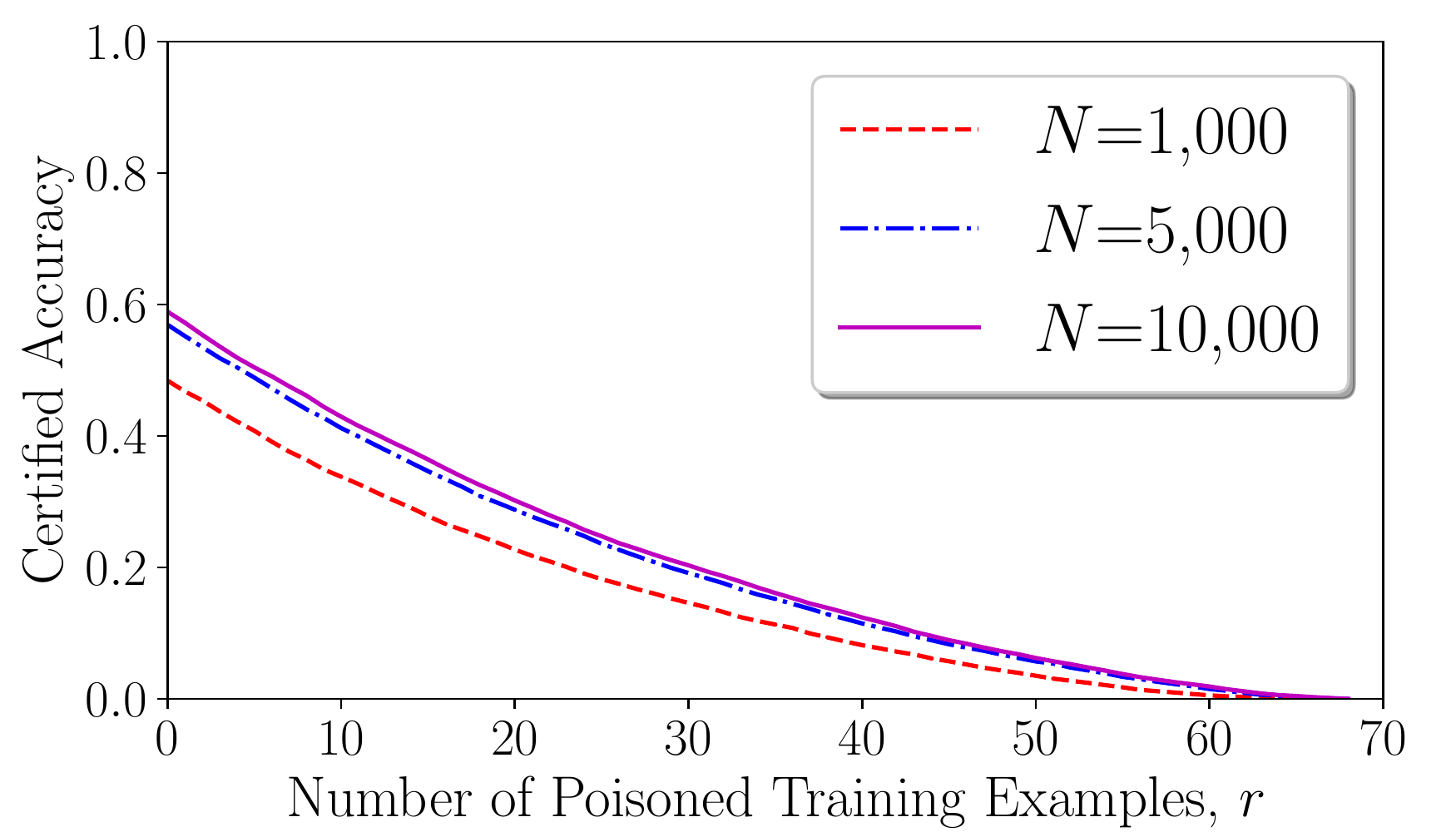}\label{fig:vary_N}}
    \vspace{-3mm}
    \caption{(a) Comparing different data poisoning attacks. (b)-(d) Impact of $k$, $\alpha$, and $N$ on the certified accuracy of our method. The first row is the result on MNIST and the second row is the result on CIFAR10.}
    \label{fig:vary_k_alpha_N}
\end{figure*}

\begin{figure*}[!t]
    \center
    \subfigure[Transfer learning]{\includegraphics[width=0.3\textwidth]{./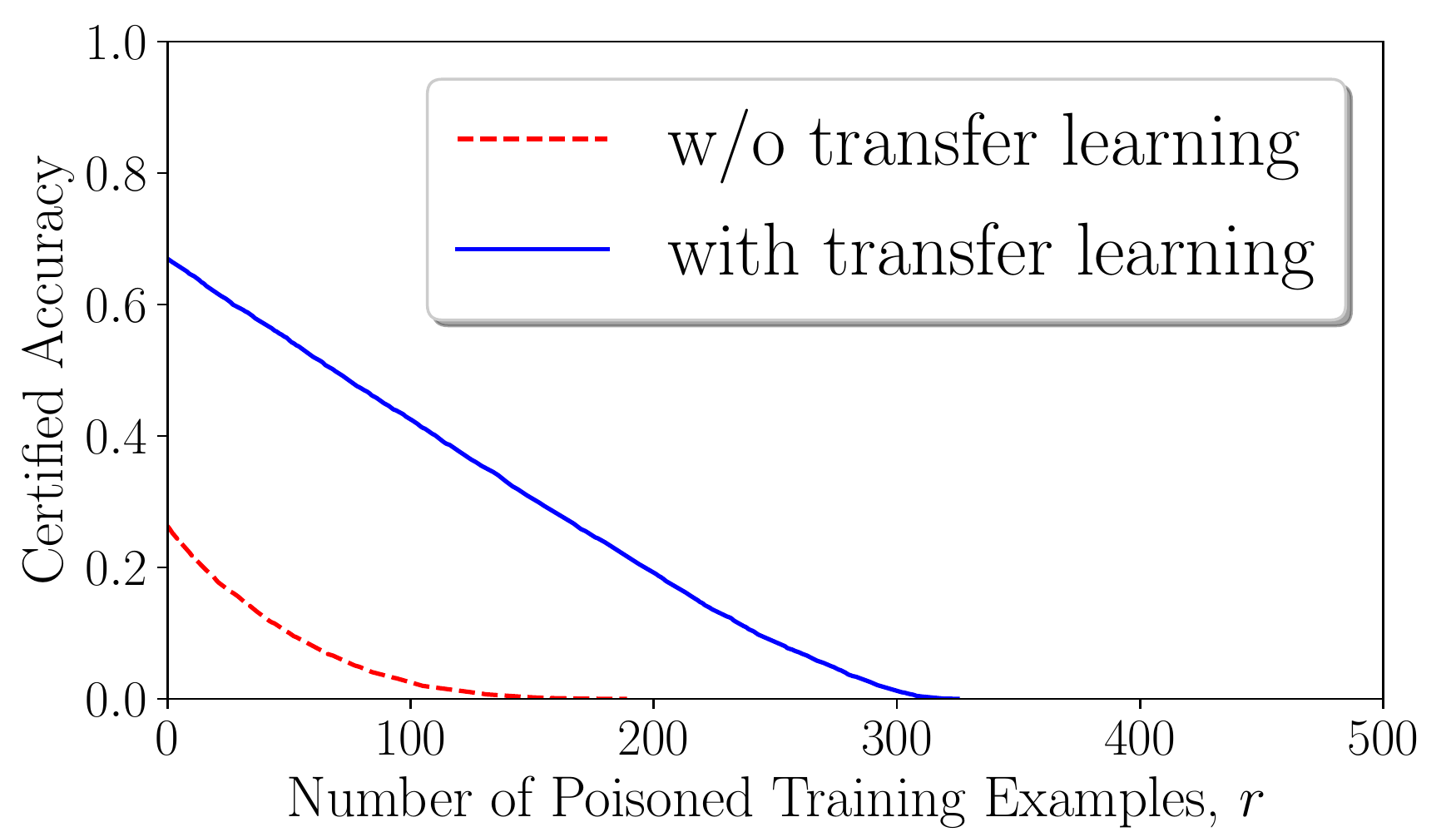}\label{fig:transfer}}  
    \subfigure[Comparing certified accuracy]{\includegraphics[width=0.3\textwidth]{./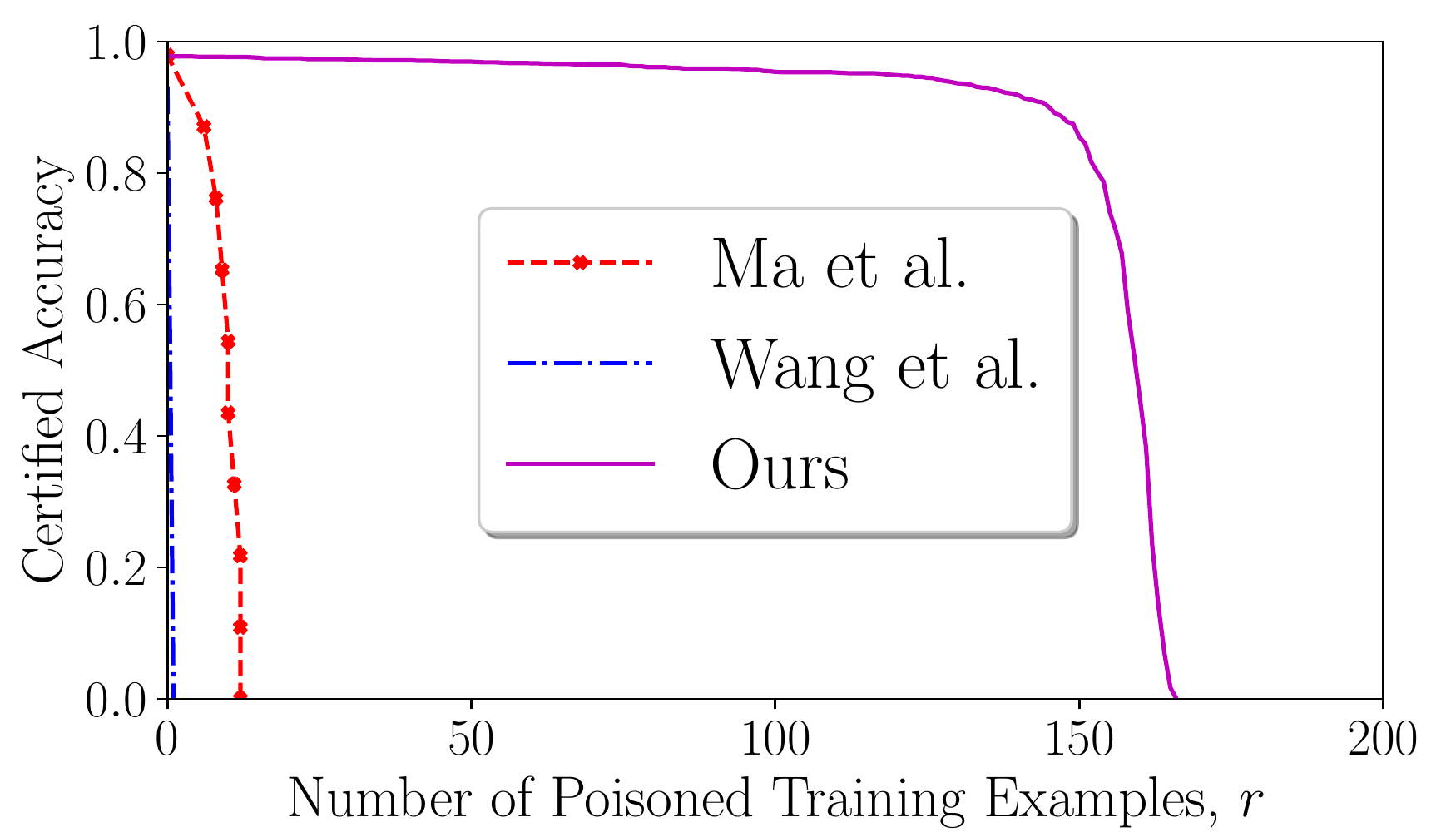}\label{fig:compare_acc}}  
    \subfigure[Comparing running time]{\includegraphics[width=0.3\textwidth]{./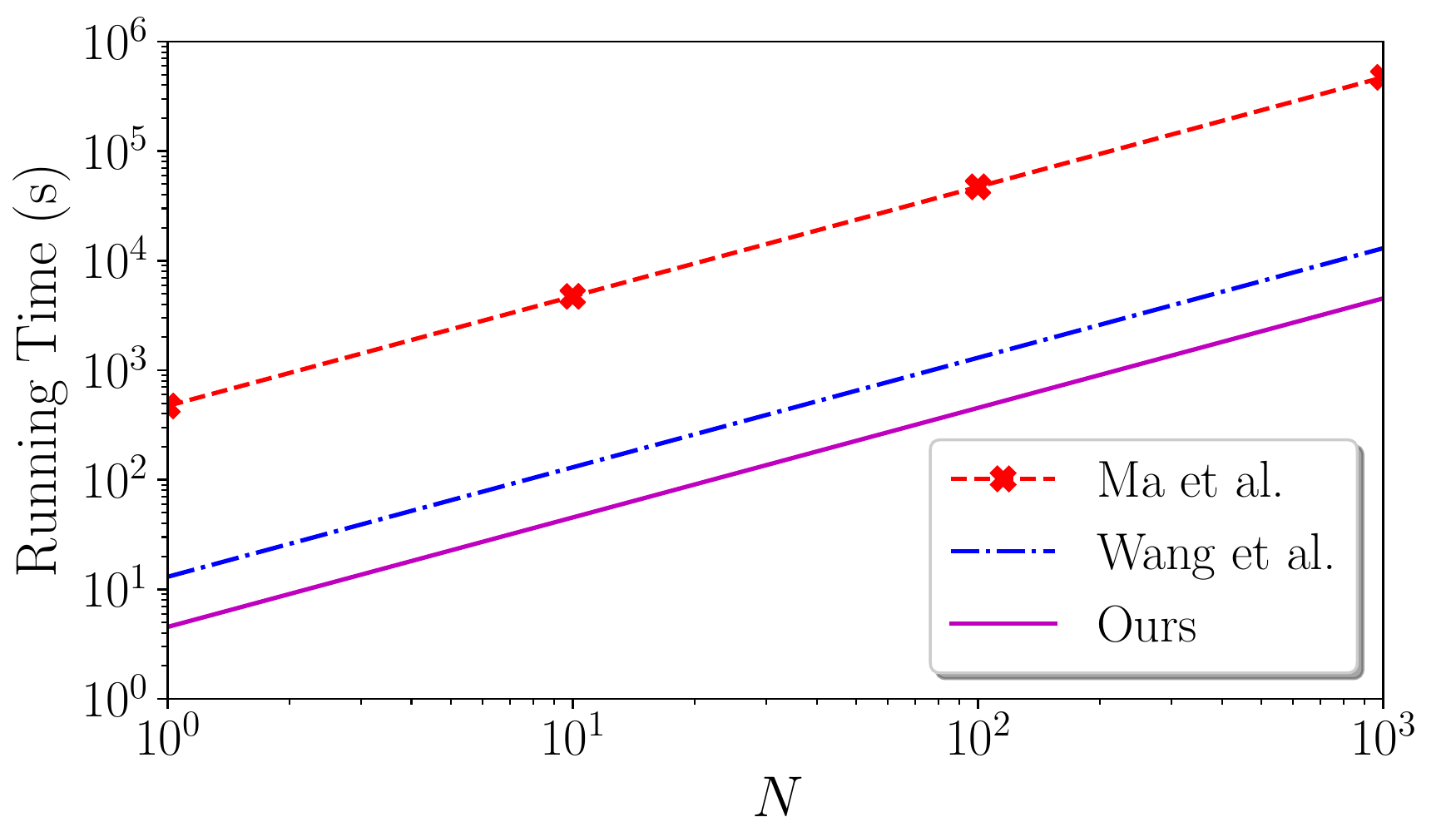}\label{fig:compare_time}} 
    \vspace{-3mm}
    \caption{(a) Transfer learning improves our certified accuracy on CIFAR10. Comparing  our method with existing methods with respect to (b) certified accuracy and (c) running time on the MNIST 1/7 dataset. }
    \label{fig:compare}
\end{figure*}

\subsection{Experimental Results}
\noindent
{\bf Comparing different data poisoning attacks:} An attacker can modify, delete, and/or insert training examples in data poisoning attacks. We compare the certified accuracy of our method when an attacker only modifies, deletes, or inserts training examples. Our Corollary \ref{corollary1}-\ref{corollary3} show the certified poisoning sizes for such attacks.   
Figure~\ref{fig:vary_case} shows the comparison results, where ``All'' corresponds to the attacks that can use modification, deletion, and insertion. Our method achieves the best certified accuracy for attacks that only delete training examples. This is because deletion simply reduces the size of the clean training dataset. The curves corresponding to Modification and All overlap and have the lowest certified accuracy. This is because modifying a training example is equivalent to deleting an existing training example and inserting a new one. In the following experiments, we use the All attacks unless otherwise mentioned. 

\myparatight{Impact of $k$, $\alpha$, and $N$} Figure~\ref{fig:vary_k_alpha_N} shows the impact of $k$,  $\alpha$, and $N$ on the certified accuracy of our method. As the results show,  $k$ controls a tradeoff between accuracy under no poisoning and robustness. Specifically, when $k$ is larger, our method has a higher accuracy when there are no data poisoning attacks (i.e., $r=0$) but the certified accuracy drops more quickly as the number of poisoned training examples increases. The reason is that a larger $k$ makes it more likely to sample poisoned training examples when creating the subsamples in bagging. The certified accuracy increases as $\alpha$ or $N$ increases. The reason is that a larger  $\alpha$ or $N$  produces tighter estimated probability bounds, which make the certified poisoning sizes larger. We also observe  that the certified accuracy is relatively insensitive to $\alpha$. 

\myparatight{Transfer learning improves certified accuracy}
Our method trains multiple base classifiers and each base classifier is trained using $k$ training examples. Improving the accuracy of each base classifier can improve the certified accuracy. We explore using transfer learning to train more accurate base classifiers. Specifically, we use the Inception-v3 classifier pretrained on ImageNet to extract features and we use a public implementation\footnote{https://github.com/alexisbcook/keras\_transfer\_cifar10} 
to train our base classifiers on CIFAR10. Figure~\ref{fig:transfer} shows that transfer learning can significantly increase our certified accuracy, where $k=100$, $\alpha=0.001$, and $N=1,000$. Note that we assume the pretrained classifier is not poisoned in this experiment.  

\myparatight{Comparing with \cite{ma2019data}, \cite{wang2020certifying}, and \cite{rosenfeld2020certified}}
 Since these methods are not scalable because they train $N$ classifiers on the entire training dataset, we perform comparisons on the MNIST 1/7 dataset that just includes digits 1 and 7. This subset includes 13,007 training examples and 2,163 testing examples. Note that our above experiments used the entire MNIST dataset.


\begin{itemize}
    \item {\bf \cite{ma2019data}.} Ma et al.  showed that a classifier trained with differential privacy  achieves certified robustness against data poisoning attacks. Suppose $ACC_r$ is the testing accuracy for $\mathcal{D}_e$ of a differentially private classifier trained on a poisoned training dataset with $r$ poisoned training examples. Based on the Theorem 3 in~\citep{ma2019data},  
    we have the expected testing accuracy $E(ACC_r)$ is lower bounded by a certain function of $E(ACC)$, $r$, and $(\epsilon, \delta)$ (the function can be found in their Theorem 3), where $E(ACC)$ is the expected testing accuracy of a differentially private classifier that is trained using the clean training dataset and $(\epsilon, \delta)$ are the differential privacy parameters. The randomness in $E(ACC_r)$ and $E(ACC)$ are from differential privacy. This lower bound is the certified accuracy that the method achieves.  A lower bound of $E(ACC)$ can be further estimated with confidence level $1-\alpha$ via training $N$ differentially private classifiers on the entire clean training dataset. For simplicity, we estimate $E(ACC)$ as the average testing accuracies of the $N$ differentially private classifiers, which gives advantages for this method.    We use DP-SGD~\citep{Abadi16} implemented in TensorFlow 
    to train differentially private classifiers. Moreover, we set $\epsilon=0.3$ and $\delta=10^{-5}$ such that this method and our method achieve comparable certified accuracies when $r=0$.

    \item {\bf \cite{wang2020certifying} and \cite{rosenfeld2020certified}.} Wang et al. proposed a randomized smoothing based method to certify robustness against backdoor attacks via randomly flipping features and labels of training examples as well as features of testing examples. 
    Rosenfeld et al. leveraged randomized smoothing to certify robustness against label flipping attacks. 
    Both methods can be generalized to certify robustness against data poisoning attacks that modify both features and labels of existing training examples via  randomly flipping features and labels of training examples. Moreover, the two methods become the same after such generalization. 
    Therefore, we only show results for \cite{wang2020certifying}. 
    In particular, we binarize the features to apply this method. We train $N$ classifiers to estimate the certified accuracy with a confidence level $1-\alpha$. Unlike our method, when training a classifier, they flip each feature/label value in the training dataset with probability $\beta$ and use the entire noisy training dataset. When predicting the label of a testing example, this method takes a majority vote among the $N$ classifiers. We set $\beta=0.3$ such that this method and our method achieve comparable certified accuracies when $r=0$. We note that this method certifies the number of poisoned features/labels in the training dataset. We transform this certificate to the number of poisoned training examples as $\lfloor \frac{F}{d+1}\rfloor$, where $F$ is the certified number of features/labels and $d+1$ is the number of features/label of a training example ($d$ features + one label). We have $d=784$ for MNIST. 

\end{itemize}

Figure~\ref{fig:compare_acc} shows the comparison results, where $k=50$, $\alpha = 0.001$, and $N=1,000$. To be consistent with previous work, we did not use data augmentation when training the base classifiers for all three methods in these experiments. Our method significantly outperforms existing methods. For example, our method can achieve $96.95\%$ certified accuracy when the number of poisoned training examples is $r=50$, while the certified accuracy is $0$ under the same setting for existing methods. Figure~\ref{fig:compare_time} shows that our method is also more efficient than existing methods. This is because our method trains base classifiers on a small number of training examples while existing methods train classifiers on the entire training dataset.  
Ma et al. outperforms Wang et al. and Rosenfeld et al. because differential privacy directly certifies robustness against modification/deletion/insertion of training examples while randomized smoothing was designed to certify robustness against modifications of features/labels.

\section{Related Work}

Data poisoning attacks  carefully modify, delete, and/or insert some training examples in the training dataset such that a learnt model makes incorrect predictions for many testing examples indiscriminately (i.e., the learnt model has a large testing error rate) or for some attacker-chosen testing examples. 
 For instance,  data poisoning attacks  have been shown to be effective for Bayes classifiers~\citep{nelson2008exploiting}, SVMs~\citep{biggio2012poisoning}, neural networks~\citep{yang2017generative,munoz2017towards,Suciu18,shafahi2018poison}, linear regression models~\citep{mei2015using,Jagielski18}, PCA~\citep{rubinstein2009antidote}, LASSO~\citep{xiao2015feature}, collaborative filtering~\citep{li2016data,yang2017fake,fang2018poisoning,fang2020influence}, clustering~\citep{biggio2013data,biggio2014poisoning}, graph-based methods~\citep{zugner2018adversarial,wang2019attacking,jia2020certifiedcommunity,zhang2020backdoor}, federated learning~\citep{fang2020local,bhagoji2019analyzing,bagdasaryan2020backdoor}, and others~\citep{mozaffari2014systematic,mei2015security,koh2018stronger,zhu2019transferable}. 
We note that backdoor attacks~\citep{gu2017badnets,liu2017trojaning} also poison the training dataset. However, unlike data poisoning attacks, backdoor attacks also inject perturbation (i.e., a trigger) to testing examples.

One category of defenses~\citep{Cretu08,barreno2010security,Suciu18,Tran18} aim to detect the poisoned training examples based on their negative impact on the error rate of the learnt model. 
Another category of defenses~\citep{Feng14,Jagielski18} aim to design  new loss functions, solving which detects the poisoned training examples and learns a model simultaneously. For instance, \cite{Jagielski18}  proposed to jointly optimize the selection of a subset of training examples with a given size and a model  that minimizes the loss function; and the unselected training examples are treated as poisoned ones. 
\cite{steinhardt2017certified} assumes that a model is trained only using examples in a feasible set and derives an approximate upper bound of the loss function for any data poisoning attacks under these assumptions.  
However, all of these defenses cannot certify that the learnt model predicts the same label for a testing example under data poisoning attacks.

\cite{ma2019data} shows that differentially private models  certify robustness against data poisoning attacks.  
\cite{wang2020certifying} proposes to use randomized smoothing to certify robustness against backdoor attacks, which is also applicable to certify robustness against data poisoning attacks. \cite{rosenfeld2020certified} leverages randomized smoothing to certify robustness against label flipping attacks. 
However, these defenses achieve loose certified robustness guarantees. Moreover, \cite{ma2019data} is only applicable to learning algorithms that can be differentially private, while \cite{wang2020certifying} and \cite{rosenfeld2020certified} are  only applicable to data poisoning attacks that modify existing training examples.  \cite{biggio2011bagging} proposed bagging as an empirical defense  against data poisoning attacks. However, they did not derive the certified robustness of bagging. We note that a concurrent work~\citep{levine2020deep} proposed to certify robustness against data poisoning attacks via partitioning the training dataset using a hash function. However, their results are only applicable to deterministic learning algorithms.

\section{Conclusion}
Data poisoning attacks pose severe security threats to machine learning systems. In this work, we show the intrinsic certified robustness of bagging against data poisoning attacks. Specifically, we show that bagging predicts the same label for a testing example when the number of poisoned training examples is bounded. Moreover, we show that our derived bound is tight if no assumptions on the base learning algorithm are made. We also empirically demonstrate the effectiveness of our method using MNIST and CIFAR10. Our results show that our method achieves much better certified robustness and is more efficient than existing certified defenses. Interesting future work includes: 1) generalizing our method to other types of data, e.g., graphs, and 2) improving our method by leveraging meta-learning.

\section{ Acknowledgments}
We thank the anonymous reviewers for insightful reviews. 
This work was supported by NSF grant No. 1937786.

{
\bibliographystyle{plainnat}
\bibliography{refs,refs1,refs2}
}

\appendix
\onecolumn
\newpage 

\section{Proof of Theorem~\ref{certified_radius_bagging}}
\label{proof_of_certified_radius_bagging}

We first define some notations that will be used in our proof. Given a training dataset $\mathcal{D}$ and its poisoned version $\mathcal{D}^{\prime}$, we define the following two random variables:
\begin{align}
   & X = g(\mathcal{D}) \\
   & Y = g(\mathcal{D}^{\prime}), 
\end{align}
 where $X$ and $Y$ respectively are two random lists with $k$ examples sampled from $\mathcal{D}$ and $\mathcal{D}^{\prime}$ with replacement uniformly at random. We denote by $\mathcal{I}=\mathcal{D}\cap\mathcal{D}^{\prime}$ the set of training examples that are in both $\mathcal{D}$ and $\mathcal{D}'$. We denote $n=|\mathcal{D}|$, $n'=|\mathcal{D}'|$, and $m=|\mathcal{I}|$, which are the number of training examples in $\mathcal{D}$, $\mathcal{D}^{\prime}$, and $\mathcal{D}\cap \mathcal{D}^{\prime}$, respectively. We use $\Omega$ to denote the joint space of random variables $X$ and $Y$, i.e., each element in $\Omega$ is a list with $k$ examples sampled from $\mathcal{D} $ or $ \mathcal{D}^{\prime}$ with replacement uniformly at random. For convenience, we define operators $\sqsubseteq, \not\sqsubseteq$ as follows: 
\begin{definition}[$\sqsubseteq, \not\sqsubseteq$]
Assuming $\omega\in \Omega$ is a list of $k$ examples and $\mathcal{S}$ is a set of examples, we say $\omega \sqsubseteq \mathcal{S}$ if $\text{ }\forall \mathbf{w} \in \omega, \mathbf{w} \in \mathcal{S}$. We say $\omega \not\sqsubseteq \mathcal{S}$ if $\text{ }\exists \mathbf{w} \in \omega, \mathbf{w} \not\in \mathcal{S}$. 
\end{definition}
For instance, we  have $X \sqsubseteq \mathcal{D}$ and $Y \sqsubseteq \mathcal{D}^{\prime}$. 
Before proving our theorem, we  show a variant of the Neyman-Pearson Lemma~\citep{neyman1933ix} that will be used in our proof. 
\begin{lemma}[Neyman-Pearson Lemma]
\label{lemma_np}
Suppose $X$ and $Y$ are two random variables in the space $\Omega$ with probability distributions $\mu_{x}$ and $\mu_{y}$, respectively. Let $M:\Omega\xrightarrow{} \{0,1\}$ be a random or deterministic function. Then, we have the following:  
\begin{itemize}
\item If $S_1=\{\omega\in \Omega:\mu_{x}(\omega) > t\cdot \mu_{y}(\omega)  \}$ and $S_2=\{\omega\in \Omega:\mu_{x}(\omega) = t\cdot \mu_{y}(\omega)  \}$ for some $t > 0$. Let $S=S_1\cup S_3$, where $S_3 \subseteq S_2$. If we have $\text{Pr}(M(X)=1)\geq \text{Pr}(X\in S)$, then $\text{Pr}(M(Y)=1)\geq \text{Pr}(Y\in S)$.

\item If $S_1=\{\omega\in \Omega:\mu_{x}(\omega) < t\cdot \mu_{y}(\omega)  \}$ and $S_2=\{\omega\in \Omega:\mu_{x}(\omega) = t\cdot \mu_{y}(\omega)  \}$ for some $t > 0$. Let $S=S_1\cup S_3$, where $S_3 \subseteq S_2$. If we have $\text{Pr}(M(X)=1)\leq \text{Pr}(X\in S)$, then $\text{Pr}(M(Y)=1)\leq \text{Pr}(Y\in S)$.
\end{itemize}
\end{lemma}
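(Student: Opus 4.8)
The plan is to prove both bullets by the standard pointwise-sign argument underlying the Neyman--Pearson Lemma, adapted to permit a randomized test $M$. Since $\Omega$ here is finite (each element is one of the $n^{k}$ subsamples), I would work with sums rather than integrals, so that the expansion and rearrangement steps below are unconditionally valid with no measure-theoretic bookkeeping. First I would encode the possibly randomized function $M$ by its acceptance probability $\phi(\omega)=\text{Pr}(M(\omega)=1)\in[0,1]$, so that $\text{Pr}(M(X)=1)=\sum_{\omega\in\Omega}\phi(\omega)\,\mu_{x}(\omega)$ and likewise for $Y$. I also write $\mathbb{I}_{S}(\omega)$ for the indicator of the region $S=S_{1}\cup S_{3}$, and note that since $S_{1},S_{2}$ are disjoint and $S_{3}\subseteq S_{2}$, we have $S\subseteq S_{1}\cup S_{2}$ and $S_{1}\subseteq S$.

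The crux is the claim that, for the first bullet, the pointwise product is nonnegative:
\[
\bigl(\mathbb{I}_{S}(\omega)-\phi(\omega)\bigr)\bigl(\mu_{x}(\omega)-t\cdot\mu_{y}(\omega)\bigr)\geq 0\quad\text{for every }\omega\in\Omega.
\]
This follows from a three-case check on the sign of $\mu_{x}-t\mu_{y}$: on $S_{1}$ the second factor is positive while $\mathbb{I}_{S}=1\geq\phi$; on $\{\mu_{x}<t\mu_{y}\}$ the second factor is negative while $\omega\notin S$ forces $\mathbb{I}_{S}=0\leq\phi$; and on $S_{2}$ the second factor vanishes, so it is irrelevant which part of $S_{2}$ was placed into $S_{3}$. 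Summing over $\omega$ and expanding the four resulting terms gives
\[
\text{Pr}(X\in S)-\text{Pr}(M(X)=1)+t\bigl[\text{Pr}(M(Y)=1)-\text{Pr}(Y\in S)\bigr]\geq 0.
\]
Since the hypothesis $\text{Pr}(M(X)=1)\geq\text{Pr}(X\in S)$ makes the first difference nonpositive, we obtain $t\bigl[\text{Pr}(M(Y)=1)-\text{Pr}(Y\in S)\bigr]\geq 0$, and dividing by $t>0$ yields $\text{Pr}(M(Y)=1)\geq\text{Pr}(Y\in S)$, which is the first bullet.

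For the second bullet the region is built from $S_{1}=\{\mu_{x}<t\mu_{y}\}$, and the identical case analysis now produces the reversed pointwise inequality $(\mathbb{I}_{S}-\phi)(\mu_{x}-t\mu_{y})\leq 0$, so the summed inequality flips sign; combined with the hypothesis $\text{Pr}(M(X)=1)\leq\text{Pr}(X\in S)$ and $t>0$ this gives $\text{Pr}(M(Y)=1)\leq\text{Pr}(Y\in S)$. Thus the second bullet follows by the same computation with every inequality reversed. The only genuinely delicate points are the boundary set $S_{2}$ and the randomized $M$, and I expect these to be the main things to get right rather than hard: because $S_{3}$ may be an arbitrary subset of $S_{2}$, one might fear that the value of $\mathbb{I}_{S}$ on $S_{2}$ affects the bound, but the resolution is exactly that $\mu_{x}-t\mu_{y}\equiv 0$ on $S_{2}$, so those terms contribute nothing regardless of $S_{3}$; and randomization is absorbed entirely into the substitution $\phi=\text{Pr}(M(\cdot)=1)\in[0,1]$, since the case analysis uses only $0\leq\phi\leq 1$, with the deterministic case being the special instance $\phi\in\{0,1\}$.
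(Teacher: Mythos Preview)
Your proof is correct and is essentially the same argument as the paper's: both exploit the sign of $\mu_{x}(\omega)-t\,\mu_{y}(\omega)$ on $S$ versus $S^{c}$ together with $0\le \phi(\omega)\le 1$, then compare the $X$- and $Y$-expectations. The only cosmetic difference is packaging: you state the single pointwise inequality $(\mathbb{I}_{S}-\phi)(\mu_{x}-t\mu_{y})\ge 0$ and sum it, whereas the paper expands $\text{Pr}(M(Y)=1)-\text{Pr}(Y\in S)$, splits over $S$ and $S^{c}$, and bounds term-by-term by $\tfrac{1}{t}\bigl(\text{Pr}(M(X)=1)-\text{Pr}(X\in S)\bigr)$; these are algebraically equivalent rearrangements of the same computation.
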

\begin{proof}
We show the proof of the first part, and the second part can be proved similarly. For simplicity, we use $M(1|\omega)$ and $M(0|\omega)$ to denote the probabilities that $M(\omega)=0$ and $M(\omega)=1$, respectively. We use $S^c$ to denote the complement of $S$, i.e., $S^{c} = \Omega\setminus S$. We have the following: 
\begin{align}
 &   \text{Pr}(M(Y)=1) - \text{Pr}(Y\in S) \\
 =& \sum_{\omega\in\Omega}M(1|\omega)\cdot\mu_{y}(\omega)- \sum_{\omega\in S}\mu_{y}(\omega)\\
 =& \sum_{\omega\in S^c}M(1|\omega)\cdot\mu_{y}(\omega)+ \sum_{\omega\in S}M(1|\omega)\cdot\mu_{y}(\omega)-  \sum_{\omega\in S}M(1|\omega)\cdot\mu_{y}(\omega)-\sum_{\omega\in S}M(0|\omega)\cdot\mu_{y}(\omega)\\
 \label{np_lemma_proof_key_1}
 =& \sum_{\omega\in S^c}M(1|\omega)\cdot\mu_{y}(\omega)-\sum_{\omega\in S}M(0|\omega)\cdot\mu_{y}(\omega)\\ 
 \geq & \frac{1}{t} \cdot ( \sum_{\omega\in S^c}M(1|\omega)\cdot\mu_{x}(\omega)-\sum_{\omega\in S}M(0|\omega)\cdot\mu_{x}(\omega) )\\ 
  \label{np_lemma_proof_key_2}
 = & \frac{1}{t} \cdot ( \sum_{\omega\in S^c}M(1|\omega)\cdot\mu_{x}(\omega)+ \sum_{\omega\in S}M(1|\omega)\cdot\mu_{x}(\omega)-  \sum_{\omega\in S}M(1|\omega)\cdot\mu_{x}(\omega)-\sum_{\omega\in S}M(0|\omega)\cdot\mu_{x}(\omega) ) \\
 = &  \frac{1}{t} \cdot (\sum_{\omega\in\Omega}M(1|\omega)\cdot\mu_{x}(\omega)- \sum_{\omega\in S}\mu_{x}(\omega) ) \\
 =&  \frac{1}{t} \cdot ( \text{Pr}(M(X)=1) - \text{Pr}(X\in S)  ) \\
 \geq & 0. 
\end{align}
We obtain~(\ref{np_lemma_proof_key_2}) from~(\ref{np_lemma_proof_key_1}) because $\mu_{x}(\omega) \geq t\cdot \mu_{y}(\omega), \forall \omega \in S$ and  $\mu_{x}(\omega) \leq t\cdot \mu_{y}(\omega), \forall \omega \in S^c$. We have the last inequality because $\text{Pr}(M(X)=1)\geq \text{Pr}(X\in S)$. 
\end{proof}

Next, we prove our Theorem~\ref{certified_radius_bagging}. Our goal is to show that $h(\mathcal{D}', \mathbf{x})=l$, i.e., $\text{Pr}(\mathcal{A}(Y, \mathbf{x})=l) > \max_{j\neq l} \text{Pr}(\mathcal{A}(Y, \mathbf{x})=j)$. Our key idea is to derive a lower bound of $\text{Pr}(\mathcal{A}(Y, \mathbf{x})=l)$ and an upper bound of $\max_{j\neq l} \text{Pr}(\mathcal{A}(Y, \mathbf{x})=j)$, where the lower bound and upper bound can be easily computed. We derive the lower bound and upper bound using the Neyman-Pearson Lemma. Then, we derive the certified poisoning size by requiring the lower bound to be larger than the upper bound. Next, we derive the lower bound, the upper bound, and the certified poisoning size.  

\myparatight{Deriving a lower bound of $\text{Pr}(\mathcal{A}(Y, \mathbf{x})=l)$} We first define the following residual:
\begin{align}
    \delta_l =\underline{p_l} - (\lfloor \underline{p_l}\cdot n^{k} \rfloor)/n^{k}. 
\end{align}
We define a binary function $M(\omega)= \mathbb{I}(\mathcal{A}(\omega,\mathbf{x})=l)$ over the space $\Omega$, where $\omega\in \Omega$ and $\mathbb{I}$ is the indicator function. Then, we have $\text{Pr}(\mathcal{A}(Y, \mathbf{x})=l)=\text{Pr}(M(Y)=1)$. Our idea is to construct a subspace for which we can apply the first part of Lemma~\ref{lemma_np} to derive a lower bound of $\text{Pr}(M(Y)=1)$. 
 We first divide the space $\Omega$  into three subspaces as follows: 
\begin{align}
& \mathcal{B}=\{\omega\in\Omega |\omega \sqsubseteq \mathcal{D}, \omega\not\sqsubseteq \mathcal{I}  \}, \\
& \mathcal{C}=\{\omega\in\Omega |\omega \sqsubseteq \mathcal{D}^{\prime}, \omega\not\sqsubseteq \mathcal{I}  \}, \\
& \mathcal{E}=\{\omega\in\Omega | \omega \sqsubseteq \mathcal{I}  \}.
\end{align}
Since we sample $k$ training examples with replacement uniformly at random, we have the following: 
\begin{align}
&\text{Pr}(X = \omega) = 
\begin{cases}
 \frac{1}{{n^k}}, &\text{ if } \omega\in \mathcal{B}  \cup \mathcal{E} \\
 0, &\text{ otherwise}
\end{cases} \\
&\text{Pr}(Y = \omega) = 
\begin{cases}
 \frac{1}{{(n^{\prime})^k}}, &\text{ if } \omega\in \mathcal{C} \cup \mathcal{E} \\
 0, &\text{ otherwise}
\end{cases}
\end{align}
Recall that the size of $\mathcal{I}$ is $m$, i.e., $m=|\mathcal{I}|$. Then, we have the following: 
\begin{align}
\label{probability_equation_1_bagging}
\text{Pr}(X \in \mathcal{E})= (\frac{m}{n})^k, \text{Pr}(X \in \mathcal{B})=1- (\frac{m}{n})^k, \text{ and } \text{Pr}(X \in \mathcal{C})=0. \\
\label{probability_equation_2_bagging}
\text{Pr}(Y \in \mathcal{E})= (\frac{m}{n^{\prime}})^k , \text{Pr}(Y \in \mathcal{C})=1- (\frac{m}{n^{\prime}})^k, \text{ and } \text{Pr}(Y \in \mathcal{B})=0.
\end{align}
We have $\text{Pr}(X \in \mathcal{E})=(\frac{m}{n})^k$  because each of the $k$ examples is sampled independently from $\mathcal{I}$ with probability $\frac{m}{n}$. Furthermore, since $\text{Pr}(X \in \mathcal{B})+ \text{Pr}(X \in \mathcal{E}) = 1$, we obtain $\text{Pr}(X \in \mathcal{B})=1- (\frac{m}{n})^k$. Since $X \not\sqsubseteq\mathcal{D}^{\prime}$, we have $\text{Pr}(X \in \mathcal{C})=0$. Similarly, we can compute the probabilities in~(\ref{probability_equation_2_bagging}). 

We assume $\underline{p_l}-\delta_l - (1- (\frac{m}{n})^k)\geq 0$.  We can make this assumption because we only need to find a sufficient condition for  $h(\mathcal{D}', \mathbf{x})=l$.  We define $\mathcal{B}^{\prime}\subseteq \mathcal{E}$, i.e., $\mathcal{B}^{\prime}$ is a subset of $\mathcal{E}$, such that we have the following: 
\begin{align}
\label{definition_of_aprime_bagging}
    \text{Pr}(X\in\mathcal{B}^{\prime}) = \underline{p_l} -\delta_l - \text{Pr}(X\in\mathcal{B}) = \underline{p_l} -\delta_l - (1- (\frac{m}{n})^k). 
\end{align}
We can find such subset because $\underline{p_l}-\delta_l$ is an integer multiple of $\frac{1}{n^{k}}$. Moreover, we define $\mathcal{R}$ as follows:
\begin{align}
\label{definition_of_r_bagging}
    \mathcal{R} = \mathcal{B}\cup \mathcal{B}^{\prime}.
\end{align}
Then, based on~(\ref{equation_of_condition_probability_bagging}), we have:
\begin{align}
  \text{Pr}(\mathcal{A}(X,\mathbf{x})=l)   \geq \underline{p_l}-\delta_l = \text{Pr}(X \in \mathcal{R}).
\end{align}
Therefore, we have the following: 
\begin{align}
\label{lemma_np_condition_ab_bagging}
  \text{Pr}(M(X)=1)=   \text{Pr}(\mathcal{A}(X,\mathbf{x})=l) \geq \text{Pr}(X \in \mathcal{R}). 
\end{align}
Furthermore, we have $\text{Pr}(X = \omega) > \gamma  \cdot \text{Pr}(Y = \omega)$ if and only if $\omega\in \mathcal{B}$ and $\text{Pr}(X = \omega) = \gamma  \cdot \text{Pr}(Y = \omega)$ if $\omega\in \mathcal{B}^{\prime}$, where $\gamma = (\frac{n^{\prime}}{n})^k$. Therefore, based on the definition of $\mathcal{R}$ in~(\ref{definition_of_r_bagging}) and the condition~(\ref{lemma_np_condition_ab_bagging}), we can apply Lemma~\ref{lemma_np} to obtain the following: 
\begin{align}
\text{Pr}(M(Y)=1)=   \text{Pr}(\mathcal{A}(Y,\mathbf{x})=l) \geq \text{Pr}(Y \in \mathcal{R}).
\end{align}
 $\text{Pr}(Y \in \mathcal{R})$ is a lower bound of $\text{Pr}(\mathcal{A}(Y,\mathbf{x})=l)$ and can be computed as follows: 
\begin{align}
 &\text{Pr}(Y \in \mathcal{R}) \\
 \label{probability_vk_in_r_1_bagging}
 =& \text{Pr}(Y \in \mathcal{B}) + \text{Pr}(Y \in \mathcal{B}^{\prime}) \\
  \label{probability_vk_in_r_2_bagging}
 =& \text{Pr}(Y \in \mathcal{B}^{\prime}) \\
  \label{probability_vk_in_r_3_bagging}
 =& \text{Pr}(X \in \mathcal{B}^{\prime})/\gamma  \\
  \label{probability_vk_in_r_4_bagging}
 =& \frac{1}{\gamma }\cdot (\underline{p_l}-\delta_l -  (1- (\frac{m}{n})^k)),
\end{align}
where we have~(\ref{probability_vk_in_r_2_bagging}) from~(\ref{probability_vk_in_r_1_bagging}) because $\text{Pr}(Y \in \mathcal{B})=0$,~(\ref{probability_vk_in_r_3_bagging}) from~(\ref{probability_vk_in_r_2_bagging}) because $\text{Pr}(X = \omega) = \gamma  \cdot \text{Pr}(Y = \omega)$ for $\omega\in \mathcal{B}^{\prime}$, and the last equation from~(\ref{definition_of_aprime_bagging}).

\myparatight{Deriving an upper bound of $\max_{j\neq l} \text{Pr}(\mathcal{A}(Y, \mathbf{x})=j)$} We define the following residual:
\begin{align}
\delta_j = (\lceil \overline{p}_{j} \cdot n^{k} \rceil)/n^{k} -\overline{p}_{j}, \forall j \in \{1,2,\cdots,c\}\setminus \{l\}. 
\end{align}
We leverage the second part of Lemma~\ref{lemma_np} to derive such an upper bound. We assume $\text{Pr}(X \in\mathcal{E}) \geq \overline{p}_{j}+\delta_j$, $\forall j \in \{1,2,\cdots,c\}\setminus \{l\}$. We can make the assumption because we derive a sufficient condition for $h(\mathcal{D}', \mathbf{x})=l$.
For $\forall j \in \{1,2,\cdots,c\}\setminus \{l\}$, we define $\mathcal{C}_j\subseteq \mathcal{E}$ such that we have the following: 
\begin{align}
\label{c_j_condition_in_certify}
    \text{Pr}(X \in\mathcal{C}_j) = \overline{p}_{j} + \delta_j. 
\end{align}
We can find such $\mathcal{C}_j$ because $\overline{p}_{j} + \delta_j$ is an integer multiple of $\frac{1}{n^{k}}$.
Moreover, we define the following space: 
\begin{align}
\label{definition_of_q_bagging}
    \mathcal{Q}_j = \mathcal{C}\cup \mathcal{C}_j.
\end{align}
Therefore, based on~(\ref{equation_of_condition_probability_bagging}), we have:
\begin{align}
  \text{Pr}(\mathcal{A}(X,\mathbf{x})=j)   \leq \overline{p}_j + \delta_j = \text{Pr}(X \in \mathcal{Q}_j). 
\end{align}
We define a function $M_j(\omega)= \mathbb{I}(\mathcal{A}(\omega,\mathbf{x})=j)$, where $\omega\in \Omega$. Based on Lemma~\ref{lemma_np}, we have the following: 
\begin{align}
    \text{Pr}(M(Y)=1)=   \text{Pr}(\mathcal{A}(Y,\mathbf{x})=j) \leq \text{Pr}(Y \in \mathcal{Q}_j),
\end{align}
where $\text{Pr}(Y \in \mathcal{Q}_j)$ can be computed as follows: 
\begin{align}
 &\text{Pr}(Y \in \mathcal{Q}_j) \\
 =& \text{Pr}(Y \in \mathcal{C}) + \text{Pr}(Y \in \mathcal{C}_j) \\
 =& 1- (\frac{m}{n^{\prime}})^k + \text{Pr}(Y \in \mathcal{C}_j) \\
 =& 1- (\frac{m}{n^{\prime}})^k + \text{Pr}(X \in \mathcal{C}_j)/\gamma  \\
 =& 1- (\frac{m}{n^{\prime}})^k + \frac{1}{\gamma }\cdot (\overline{p}_j + \delta_j).
\end{align}
Therefore, we have: 
\begin{align}
 &\max_{j\neq l} \text{Pr}(\mathcal{A}(Y, \mathbf{x})=j)\\
 \leq & \max_{j\neq l} \text{Pr}(Y \in \mathcal{Q}_j) \\
 =& 1- (\frac{m}{n^{\prime}})^k + \frac{1}{\gamma }\cdot \max_{j\neq l} (\overline{p}_j +\delta_j) \\
 \leq & 1- (\frac{m}{n^{\prime}})^k + \frac{1}{\gamma }\cdot (\overline{p}_s + \delta_s),
\end{align}
where $\overline{p}_s + \delta_s \geq \max\limits_{j \neq l} (\overline{p}_j + \delta_j)$. 

\myparatight{Deriving the certified poisoning size} To reach the goal $\text{Pr}(\mathcal{A}(Y,\mathbf{x})=l) > \max\limits_{j\neq l}\text{Pr}(\mathcal{A}(Y,\mathbf{x})=j)$, it is sufficient to have the following:   
\begin{align}
  &  \frac{1}{\gamma }\cdot (\underline{p_l}-\delta_l - (1- (\frac{m}{n})^k)) 
> 1- (\frac{m}{n^{\prime}})^k + \frac{1}{\gamma }\cdot (\overline{p}_s +\delta_s) \\
 \Longleftrightarrow &  (\frac{n^{\prime}}{n})^k -2\cdot (\frac{m}{n})^k + 1 -(\underline{p_l} -\overline{p}_s  -\delta_l - \delta_s) < 0.
\end{align}
Taking all poisoned training datasets $\mathcal{D}'$  (i.e., $n-r \leq n^{\prime} \leq  n+r$) into consideration, we have the following sufficient condition: 
\begin{align}
\label{certify_poisoning_size_condition_proof_final_check}
  \max_{n-r \leq n^{\prime} \leq n+r} (\frac{n^{\prime}}{n})^k -2\cdot (\frac{m}{n})^k + 1 -(\underline{p_l} -\overline{p}_s - \delta_l - \delta_s) < 0. 
\end{align}
Note that $m = \max(n,n^{\prime}) - r$. Furthermore, when the above condition~(\ref{certify_poisoning_size_condition_proof_final_check}) is satisfied, we  have $\underline{p_l}-\delta_l - (1- (\frac{m}{n})^k)\geq 0$ and $\text{Pr}(X \in\mathcal{E}) =(\frac{m}{n})^k \geq \overline{p}_{j}+\delta_j,\forall j \in \{1,2,\cdots,c\}\setminus \{l\}$, which are the conditions when we can construct the spaces $\mathcal{B}^{\prime}$ and $\mathcal{C}_j$. The certified poisoning size $r^{*}$ is the maximum $r$ that satisfies the above sufficient condition. 
In other words, our certified poisoning size $r^{*}$ is the solution to the following optimization problem: 
\begin{align}
r^{*} & =  \argmax_{r} r \nonumber \\
\label{certified_condition_bagging_proof}
& \text{s.t.} \max_{n-r \leq n' \leq n+r}(\frac{n^{\prime}}{n})^{k} - 2 \cdot (\frac{\max(n,n^{\prime})-r}{n})^{k} + 1 - (\underline{p_l}-\overline{p}_s -\delta_l - \delta_s) < 0.  
\end{align}

\section{Proof of Theorem~\ref{tightness_theorem}}
\label{proof_of_tightness}
Our idea is to construct a learning algorithm $\mathcal{A}^{*}$ such that the label $l$ is not predicted by the bagging predictor or there exist ties. 
When $r>r^*$ and $\delta_l = \delta_s = 0$, there exists a poisoned training dataset $\mathcal{D}^{\prime}$ with a certain $n^{\prime} \in [n-r, n+r]$ such that we have:
\begin{align}
\label{equation_for_d_construct}
 &   (\frac{n^{\prime}}{n})^{k} - 2 \cdot (\frac{\max(n,n^{\prime})-r}{n})^{k} + 1 - (\underline{p_l}-\overline{p}_s ) \geq 0 \\
\Longleftrightarrow  &    (\frac{n^{\prime}}{n})^{k} - 2 \cdot (\frac{m}{n})^{k} + 1 - (\underline{p_l}-\overline{p}_s ) \geq 0 \\
\Longleftrightarrow &  1+(\frac{n^{\prime}}{n})^k -2\cdot (\frac{m}{n})^k \geq  \underline{p_l} -\overline{p}_s   \\
\label{equation_for_d_construct_equivalent}
\Longleftrightarrow & \frac{1}{\gamma }\cdot (\underline{p_l} - (1- (\frac{m}{n})^k) \leq 1-(\frac{m}{n^{\prime}})^k + \frac{1}{\gamma }\cdot \overline{p}_s, 
\end{align}
where $m=\max(n,n^{\prime})-r$ and $\gamma  = (\frac{n^{\prime}}{n})^k$. We let $\mathcal{Q}_s = \mathcal{C}\cup \mathcal{C}^{\prime}_s$, where $\mathcal{C}^{\prime}_s$ satisfies the following: 
\begin{align}
\mathcal{C}^{\prime}_s \subseteq \mathcal{E},\mathcal{C}^{\prime}_s \cap \mathcal{B}^{\prime} = \emptyset, \text{ and } \text{Pr}(X\in \mathcal{C}^{\prime}_s )=\overline{p}_s.
\end{align}
Note that we can construct such $\mathcal{C}^{\prime}_s$ because $\underline{p_l} +\overline{p}_s  \leq 1$. Then, we  divide the remaining space $\Omega\setminus (\mathcal{R}\cup \mathcal{Q}_s)$ into $c-2$ subspaces such that $\text{Pr}(X \in \mathcal{Q}_j) \leq \overline{p}_s $, where $j\in \{1,2,\cdots,c\}\setminus\{l,s\}$.  We can construct such subspaces because $\underline{p_l} +(c-1)\cdot \overline{p}_s  \geq 1$. Then, based on these subspaces, we  construct the following learning algorithm: 
\begin{align}
\mathcal{A}^{*}(\omega,\mathbf{x})  =
\begin{cases}
& l, \text{ if } \omega \in \mathcal{R} \\
& j, \text{ if } \omega \in \mathcal{Q}_j
\end{cases}
\end{align}
Then, we have the following based on the above definition of the learning algorithm $\mathcal{A}^{*}$: 
\begin{align}
    & \text{Pr}(\mathcal{A}^{*}(X,\mathbf{x})=l)=\text{Pr}(X\in \mathcal{R}) =\underline{p_l} \\
    & \text{Pr}(\mathcal{A}^{*}(X,\mathbf{x})=s)=\text{Pr}(X\in \mathcal{Q}_s) =\overline{p}_s  \\
    & \text{Pr}(\mathcal{A}^{*}(X,\mathbf{x})=j)=\text{Pr}(X\in \mathcal{Q}_j) \leq \overline{p}_s , j\in \{1,2,\cdots,c\}\setminus\{l,s\}.
\end{align}
Therefore, the learning algorithm $\mathcal{A}^{*}$ is consistent with~(\ref{equation_of_condition_probability_bagging}). Next, we show that $l$ is  not predicted by the bagging predictor or there exist ties when the training dataset is $\mathcal{D}^{\prime}$.  In particular, we have the following: 
\begin{align}
   & \text{Pr}(\mathcal{A}^{*}(Y,\mathbf{x})=l) \\
     = & \text{Pr}(Y \in \mathcal{R}) \\
     \label{construct_final_compare_1}
     =& \frac{1}{\gamma }\cdot (\underline{p_l} - (1-(\frac{m}{n})^k)) \\
     \label{construct_final_compare_2}
     \leq & 1- (\frac{m}{n^{\prime}})^k + \frac{1}{\gamma }\cdot \overline{p}_s  \\
     = & \text{Pr}(Y \in \mathcal{Q}_s) \\
     =& \text{Pr}(\mathcal{A}^{*}(Y,\mathbf{x})=s),
\end{align}
where $\gamma  = (\frac{n^{\prime}}{n})^k$ and we have~(\ref{construct_final_compare_2}) from~(\ref{construct_final_compare_1}) because of~(\ref{equation_for_d_construct_equivalent}). Therefore, label $l$ is not predicted for $\mathbf{x}$ or there exist ties when the training dataset is $\mathcal{D}^{\prime}$. 

\section{Proof of Theorem~\ref{proposition_of_certify}}
\label{proof_of_certify_proposition}
 Based on the definition of \textsc{SimuEM} in \citep{jia2020certified}, we have:  
\begin{align}
    \text{Pr}(\underline{p_{l_i}} \leq \text{Pr}(\mathcal{A}(g(\mathcal{D}),\mathbf{x}_i)=l_i) \land \overline{p}_{j}\geq \text{Pr}(\mathcal{A}(g(\mathcal{D}),\mathbf{x}_i)=j), \forall j \neq l_i) \geq 1 - \frac{\alpha}{e}.
\end{align}
Therefore, the probability that \textsc{Certify} returns an incorrect certified poisoning size for a testing example $\mathbf{x}_i$ is at most  $\frac{\alpha}{e}$, i.e., we have: 
\begin{align}
    \text{Pr}((\exists \mathcal{D}'\in B(\mathcal{D}, r_i^*), h(\mathcal{D}',\mathbf{x}_i)\neq l_i)|l_i \neq \text{ABSTAIN} ) \leq \frac{\alpha}{e}. 
\end{align}
Then, we have the following: 
\begin{align}
& \text{Pr}(\cap_{\mathbf{x}_i \in \mathcal{D}_e} ((\forall \mathcal{D}'\in B(\mathcal{D}, r_i^*), h(\mathcal{D}',\mathbf{x}_i)=l_i)|l_i \neq \text{ABSTAIN} )) \\
\label{theorem_4_apply_booles_inequality_1}
&= 1 - \text{Pr}(\cup_{\mathbf{x}_i \in \mathcal{D}_e} ((\exists \mathcal{D}'\in B(\mathcal{D},r_i^*), h(\mathcal{D}',\mathbf{x}_i) \neq l_i)|l_i \neq \text{ABSTAIN} )) \\
\label{theorem_4_apply_booles_inequality_2}
& \geq  1 - \sum_{\mathbf{x}_i \in \mathcal{D}_e}\text{Pr}((\exists \mathcal{D}'\in B(\mathcal{D}, r_i^*), h(\mathcal{D}',\mathbf{x}_i) \neq l_i)|l_i \neq \text{ABSTAIN} ) \\
& \geq  1- e \cdot \frac{\alpha}{e} \\
& = 1 -\alpha 
\end{align}
We have (\ref{theorem_4_apply_booles_inequality_2}) from (\ref{theorem_4_apply_booles_inequality_1}) according to the Boole's inequality.

\section{Derivation of the Analytical Form of $n'$}

We define $L(n^{\prime})$ as follows:
\begin{align}
  L(n^{\prime})=  (\frac{n^{\prime}}{n})^{k} - 2 \cdot (\frac{\max(n,n^{\prime})-r}{n})^{k} + 1 - (\underline{p_l}-\overline{p}_s -\delta_l -\delta_s).
\end{align}
We have the following analytical form of $n'$ at which $L(n^{\prime})$ reaches its maximum: 
{\small
\begin{align}
   \argmax_{n-r \leq n' \leq n+r} L(n') 
  \label{analyticaln}
   =\begin{cases}
     n, & \text{ if }  r\leq n\cdot (1 - \sqrt[k-1]{\frac{1}{2}})  \\ 
     n+r,  & \text{ if } r\geq n\cdot (\sqrt[k-1]{2}-1) \\
     \lceil \frac{r}{1 - \sqrt[k-1]{\frac{1}{2}}} \rceil \text{ or } \lfloor \frac{r}{1 - \sqrt[k-1]{\frac{1}{2}}} \rfloor,  & \text{ otherwise. }  
    \end{cases}
\end{align}
}
Next, we show details on how to derive such analytical form of  $n'$.
When $n-r \leq n' \leq n$, we have the following: 
\begin{align}
    L(n^{\prime})= (\frac{n^{\prime}}{n})^{k} - 2 \cdot (\frac{n-r}{n})^{k} + 1 - (\underline{p_l}-\overline{p}_s -\delta_l -\delta_s).
\end{align}
Therefore, when $n-r \leq n' \leq n$,  $L(n^{\prime})$ increases as $n^{\prime}$ increases. Thus, $L(n^{\prime})$ reaches its maximum value when $n \leq n' \leq n+r$. When $n \leq n' \leq n+r$, we have the following:
\begin{align}
    L(n^{\prime})= (\frac{n^{\prime}}{n})^{k} - 2 \cdot (\frac{n^{\prime}-r}{n})^{k} + 1 - (\underline{p_l}-\overline{p}_s -\delta_l -\delta_s).
\end{align}
Moreover, we have:
\begin{align}
    & \frac{\partial L(x)}{\partial x} \\
    =& \frac{1}{n^{k}}\cdot (k\cdot x^{k-1}-2\cdot k \cdot (x-r)^{k-1}) \\
    =& \frac{k\cdot x^{k-1}}{n^{k}}\cdot ( 1 - 2 \cdot (1 - \frac{r}{x})^{k-1}).
\end{align}
 $\frac{k\cdot x^{k-1}}{n^{k}}$ is larger than $0$. Moreover, $1 - 2 \cdot (1 - \frac{r}{x})^{k-1}$ decreases as $x$ increases when $x\geq r$ and it only has one root that is no smaller than $r$ which is as follows:
\begin{align}
    x_{root} = \frac{r}{1 - \sqrt[k-1]{\frac{1}{2}}}.
\end{align}
Therefore, we have $\frac{\partial L(x)}{\partial x} > 0$ when $r\leq x < \frac{r}{1 - \sqrt[k-1]{\frac{1}{2}}}$ and $\frac{\partial L(x)}{\partial x} < 0$ when $x > \frac{r}{1 - \sqrt[k-1]{\frac{1}{2}}}$.  $L(x)$ increases as $x$ increases in the range $[r,\frac{r}{1 - \sqrt[k-1]{\frac{1}{2}}})$ and decreases as $x$ increases in the range $(\frac{r}{1 - \sqrt[k-1]{\frac{1}{2}}},+\infty)$. Therefore, we have the following three cases:

\myparatight{Case I}  When $r \leq n\cdot (1 - \sqrt[k-1]{\frac{1}{2}})$,  $L(n^{\prime})$ reaches its maximum value at $n^{\prime}=n$ since $L(n^{\prime})$ decreases as $n^{\prime}$ increases in the range $[n,n+r]$. 

\myparatight{Case II} When $n\cdot (1 - \sqrt[k-1]{\frac{1}{2}}) < r <n\cdot (\sqrt[k-1]{2}-1)$,  $L(n^{\prime})$ reaches its maximum value at $n^{\prime}=\lceil \frac{r}{1 - \sqrt[k-1]{\frac{1}{2}}} \rceil \text{ or } \lfloor \frac{r}{1 - \sqrt[k-1]{\frac{1}{2}}} \rfloor$ since $L(n^{\prime})$ increases as $n^{\prime}$ increases in the range $[n,  \frac{r}{1 - \sqrt[k-1]{\frac{1}{2}}}]$ and decreases as $n^{\prime}$ increases in the range $[\frac{r}{1 - \sqrt[k-1]{\frac{1}{2}}}, n+r]$. 

\myparatight{Case III} When $r \geq n\cdot (\sqrt[k-1]{2}-1)$,  $L(n^{\prime})$ reaches its maximum value at $n^{\prime}=n+r$ since $L(n^{\prime})$ increases as $n^{\prime}$ increases in the range $[n,n+r]$.

\end{document}